\newcommand{\veca}{{{\mathbf{a}}}}
\newcommand{\vecb}{{{\mathbf{b}}}}
\newcommand{\vecu}{{{\mathbf{u}}}}
\newcommand{\vecp}{{{\mathbf{p}}}}
\newcommand{\calL}{{{\mathcal{L}}}}
\newcommand{\calT}{{{\mathcal{T}}}}
\newcommand{\calS}{{{\mathcal{S}}}}
\newcommand{\calE}{{{\mathcal{E}}}}
\newcommand{\calA}{{{\mathcal{A}}}}
\newcommand{\score}{{{\mathrm{sc}}}}
\newcommand{\PoA}{{{\mathrm{PoA}}}}
\newcommand{\eps}{\varepsilon}
\newcommand{\pref}{\succ}
\newtheorem{proposition}{{\bf Proposition}}
\newtheorem{definition}{{\bf Definition}}
\newtheorem{theorem}{{\bf Theorem}}
\newtheorem{lemma}{{\bf Lemma}}
\newtheorem{corollary}{{\bf Corollary}}
\newtheorem{remark}{\bf Remark}
\newtheorem{example}{{\bf Example}}
\begin{document}


\title{Equilibria of Plurality Voting: Lazy and Truth-biased Voters}

\author{Edith Elkind\thanks{University of Oxford} \and Evangelos Markakis\thanks{Athens University of Economics and Business} \and Svetlana Obraztsova\thanks{National Technical University of Athens} \and Piotr Skowron\thanks{University of Warsaw}
}
\date{}


\maketitle
\begin{abstract}
We present a systematic study of Plurality elections with strategic voters who, in addition
to having preferences over election winners, have secondary
preferences, which govern their behavior when their vote cannot affect the election outcome. 
Specifically, we study two models that have been recently considered in the
literature~\cite{des-elk:c:eq,obr-mar-tho:c:truth-biased}:
{\em lazy} voters, who prefer to abstain when they are not pivotal,
and {\em truth-biased} voters, who prefer to vote truthfully when they are not pivotal.
We extend prior work by investigating the behavior of both lazy and truth-biased
voters under different tie-breaking rules (lexicographic rule, random voter rule, random candidate rule).
Two of these six combinations of secondary preferences and a tie-breaking rule
have been studied in prior work. In order to understand the impact of different 
secondary preferences and tie-breaking rules on the election outcomes,
we study the remaining four combinations. 
We characterize pure Nash equilibria (PNE) of the resulting 
strategic games and study the complexity of related computational problems.
Our results extend to settings where some of the voters may be non-strategic.
\end{abstract}

\maketitle

\section{Introduction}

Plurality voting is a popular tool for collective decision-making 
in many domains, including both human societies and multiagent systems.
Under this voting rule, each voter is supposed to vote 
for her most favorite candidate (or abstain); the winner is then the candidate that receives the highest
number of votes. If several candidates 
have the highest score, the winner is chosen among them using a {\em tie-breaking rule};
popular tie-breaking rules include the {\em lexicographic rule}, which
imposes a fixed priority order over the candidates; 
the {\em random candidate rule}, which picks one of the tied candidates
uniformly at random; and the {\em random voter rule}, 
which picks the winner among the tied candidates according 
to the preferences of a randomly chosen voter.  

In practice, voters are often {\em strategic}, i.e., they may vote
non-truthfully if they can benefit from doing so. In that case, an election
can be viewed as a game, where the voters are the players, and each player's space of actions
includes voting for any candidate or abstaining. For deterministic rules
(such as Plurality with lexicographic tie-breaking), the behavior 
of strategic voters is determined by their preference ordering, i.e.,
a ranking of the candidates, whereas for randomized rules
a common approach is to specify utility functions for the voters;
i.e., the voters are assumed to maximize their {\em expected utility}
under the lottery induced by tie-breaking.
The outcome of the election can then be identified with a pure Nash equilibrium (PNE)
of the resulting game.

However, for the Plurality voting game with $3$ or more voters, this approach fails to provide
a useful prediction of voting behavior: for each candidate $c$
there is a PNE where $c$ is the unique winner, irrespective 
of the voters' preferences. Indeed, if there are at least $3$ voters,
the situation where all of them vote for $c$ is a PNE, as
no voter can unilaterally change the election outcome.
However, such equilibria may disappear if we use a more refined model
of voters' preferences that captures additional aspects 
of their decision-making. For instance, in practice, 
if a voter feels that her vote is unlikely to have any effect on the election
outcome, she may decide to abstain from the election.
Also, voters may be averse to lying about their preferences, 
in which case they can be expected to vote for their top candidate unless
there is a clear strategic reason to vote for someone else.
By taking into account these aspects of voters' preferences,
we obtain a more faithful model of their behavior.

The problem of characterizing and computing the equilibria of Plurality voting, 
both for ``lazy'' voters (i.e., ones who prefer to abstain when they are not pivotal)
and for ``truth-biased'' voters (ones who prefer to vote truthfully when they are not pivotal),
has recently received a considerable amount of attention.
However, it is difficult to compare the existing results,
since they rely on different tie-breaking rules. In particular, 
\cite{des-elk:c:eq}, who study lazy voters, 
use the random candidate tie-breaking rule, and~\cite{obr-mar-tho:c:truth-biased} consider truth-biased
voters and the lexicographic tie-breaking rule. Thus, it is not clear whether
the differences between the results in these papers
can be attributed to voters' secondary preferences or to the tie-breaking rule.

The primary goal of our paper is to tease out the effects of different features
of these models, by systematically considering various combinations
of secondary preferences and tie-breaking rules. We consider two types
of secondary preferences (lazy voters and truth-biased voters) and three
tie-breaking rules (the lexicographic rule, the random voter rule, and the random candidate rule);
while two of these combinations have been studied earlier by Desmedt and Elkind~\cite{des-elk:c:eq}
and Obraztsova et al.~\cite{obr-mar-tho:c:truth-biased}, to the best of our knowledge, the 
remaining four possibilities have not been considered before.  
For each of the new scenarios, we characterize the set of PNE for the resulting game;
in doing so, we also fill in a gap in the characterization of
Desmedt and Elkind for lazy voters
and random candidate tie-breaking. We then consider the problems
of deciding whether a given game admits a PNE and whether a given
candidate can be a co-winner/unique winner in some PNE of a given game.
For all settings we consider, we determine the computational
complexity of each of these problems, classifying them as either polynomial-time
solvable or NP-complete. We use our characterization results to
analyze the impact of various features of our models on the election outcomes.
Finally, we extend our results to the setting where some
of the voters may be {\em principled}, i.e., are guaranteed to vote truthfully.


\smallskip

\noindent{\bf Related Work\quad}

Equilibria of Plurality voting have been investigated by a number
of researchers, starting with~\cite{far:b:voting}.
However, most of the earlier works either consider 
solution concepts other than pure Nash equilibria, such 
as iterative elimination of dominated strategies~\cite{mou:j:dominance,dhi-loc:j:dominance}, 
or assume that voters have incomplete information
about each others' preferences~\cite{mye-web:j:voting}.
Both types of secondary preferences
(lazy voters and truth-biased voters) appear in the social choice literature, 
see, respectively, \cite{bat:j:abstentions,bor:j:costly,sin-ian:j:costly}
and \cite{dut-sen:j:nash,lom-yos:j:nash}. 
In computational social choice,
truth-biased voters have been considered by Meir et al.~\cite{mei-pol:c:convergence} 
in the context of dynamics of Plurality voting; subsequently,
Plurality elections with truth-biased voters have been investigated
empirically by Thompson et al.~\cite{tho-lev-ley:c:empirical} and theoretically 
by Obraztsova et al.~\cite{obr-mar-tho:c:truth-biased}. To the best of our knowledge, 
the only paper to study computational aspects of Plurality voting with
lazy voters is that of Desmedt and Elkind~\cite{des-elk:c:eq}. 

Our approach to tie-breaking is well-grounded in existing works.
Lexicographic tie-breaking is standard in the computational 
social choice literature.
The random candidate rule has been discussed by Desmedt and Elkind~\cite{des-elk:c:eq}, 
and, more recently, by Obraztsova, Elkind and Hazon~\cite{obr-elk-haz:c:ties}
and Obraztsova and Elkind~\cite{obr-elk:c:ties2}.
The random voter rule is used to break ties under the Schulze method~\cite{sch:j:schulze};
complexity of manipulation under this tie-breaking rule
has been studied by Aziz et al.~\cite{azi-gas-mat:c:random-voter}.  

\section{Preliminaries}
\noindent
For any positive integer $t$, we denote the set $\{1, \dots, t\}$ by $[t]$.
We consider elections with a set of {\em voters} $N=[n]$ and 
a set of {\em alternatives}, or {\em candidates}, $C = \{c_1, \dots c_m\}$.
Each voter is associated with a {\em preference order}, i.e., a strict linear
order over $C$; we denote the preference order of voter $i$ by $\succ_i$.
The list $(\succ_1,\dots, \succ_n)$ is called a {\em preference profile}.
For each $i\in N$, we set $a_i$ to be the top choice of voter $i$, and 
let $\veca=(a_1, \dots, a_n)$. Given two disjoint sets of candidates $X$, $Y$
and a preference order $\succ$, we write $X \succ Y$ if in $\succ$
all candidates from $X$ are ranked above all candidates from $Y$.

We also assume that
each voter $i\in N$ is endowed with a {\em utility function} $u_i:C\to{\mathbb N}$;
$u_i(c_j)$ is the utility derived by voter $i$ if $c_j$ is the unique election winner.
We require that $u_i(c)\neq u_i(c')$ for all $i\in N$ and all $c, c'\in C$ such that $c\neq c'$.
The vector $\vecu=(u_1,\dots, u_n)$ is called the {\em utility profile}.
Voters' preference orders and utility functions are assumed to be consistent, i.e., 
for each $i\in N$ and every pair of candidates $c, c'\in C$ we have $c\succ_i c'$
if and only if $u_i(c)>u_i(c')$; when this is the case, we will also say that $\succ_i$
is {\em induced} by $u_i$. Sometimes, instead of specifying preference
orders explicitly, we will specify the utility functions only, and assume that voters'
preference orders are induced by their utility functions; on other occasions, it will be convenient
to reason in terms of preference orders.

A {\em lottery} over $C$ is a vector $\vecp = (p_1, \dots, p_m)$ with $p_j\ge 0$
for all $j\in[m]$ and $\sum_{j\in[m]} p_j=1$. 
The value $p_j$ is the probability assigned to candidate $c_j$.
The {\em expected utility} of a voter $i\in N$ from a lottery $\vecp$
is given by $\sum_{j\in[m]}u_i(c_j)p_j$. 

In this paper we consider Plurality elections. In such elections each voter $i\in N$
submits a {\em vote}, or {\em ballot}, $b_i\in C\cup\{\bot\}$; if $b_i=\bot$, voter $i$ is said to {\em abstain}. 
The list of all votes $\vecb=(b_1, \dots, b_n)$ is also called a {\em ballot vector}.  
We say that a ballot vector is {\em trivial} if $b_i=\bot$ for all $i\in N$. 
Given a ballot vector $\vecb$ and a ballot $b'$,  
we write $(\vecb_{-i}, b')$ to denote the ballot vector
obtained from $\vecb$ by replacing $b_i$ with $b'$.
The {\em score} of an alternative
$c_j$ in an election with ballot vector $\vecb$ is given by $\score(c_j, \vecb) = |\{i\in N\mid b_i = c_j\}|$.
Given a ballot vector $\vecb$, we set $M(\vecb)=\max_{c\in C}\score(c, \vecb)$ and let
$W(\vecb) = \{c\in C\mid \score(c,\vecb) = M(\vecb)\}$,
$H(\vecb) = \{c\in C\mid \score(c,\vecb) = M(\vecb)-1\}$,
$H'(\vecb) = \{c\in C\mid \score(c,\vecb) = M(\vecb)-2\}$.
The set $W(\vecb)$ is called the {\em winning set}. 
Note that if $\vecb$ is trivial then $W(\vecb)=C$.
If $|W(\vecb)|=1$ then the unique candidate
in $W(\vecb)$ is declared to be the winner. Otherwise, the winner is selected
from $W(\vecb)$ according to one of the following tie-breaking rules.
\begin{itemize}
\item [(1)]
Under the {\em lexicographic rule $R^L$}, the winner is the candidate $c_j\in W(\vecb)$
such that $j\le k$ for all $c_k\in W(\vecb)$.
\item [(2)]
Under the {\em random candidate rule $R^C$}, the winner is chosen from $W(\vecb)$ uniformly at random.
\item [(3)]
Under the {\em random voter rule $R^V$}, we select a voter from $N$ uniformly at random;
if she has voted for a candidate in $W(\vecb)$, we output this candidate, 
otherwise we ask this voter to report her most preferred candidate in $W(\vecb)$,
and output the answer. This additional elicitation step
may appear difficult to implement in practice; fortunately, we can  
show that, in equilibrium it is almost never necessary.
\end{itemize} 
Thus, the outcome of an election is a lottery over~$C$; however, for $R^L$
this lottery is degenerate, i.e., it always assigns the entire
probability mass to a single candidate.
For each $X\in\{L, C, V\}$ and each ballot vector~$\vecb$, 
let $\vecp^X(\vecb)$  denote the lottery that corresponds to applying $R^X$ to 
the set $W(\vecb)$.
Note also that for every $c_j\in C$ it holds that if $p^C_j(\vecb)\neq 0$ then $p^C_j(\vecb)\ge \frac{1}{m}$;
similarly, if $p^V_j(\vecb)\neq 0$ then $p^V_j(\vecb)\ge \frac{1}{n}$.

In what follows, we consider {\em lazy} voters, who prefer to abstain when their vote
has no effect on the election outcome,
and {\em truth-biased} voters, who never abstain, but prefer to vote truthfully when their vote
has no effect on the election outcome.
Formally, pick $\eps<\min\{\frac{1}{m}, \frac{1}{n}\}$, and consider 
a utility profile $\vecu$ and a tie-breaking rule $R^X\in\{R^C, R^V, R^L\}$.
Then 
\begin{itemize}
\item 
if voter $i$ is {\em lazy}, her utility in an election with ballot vector $\vecb$ under tie-breaking rule $R^X$ 
is given by
$$
U_i(\vecb)=
\begin{cases}
\sum_{j\in [m]}p^X_j(\vecb)u_i(c_j) & \text{if $b_i\in C$},\\
\sum_{j\in [m]}p^X_j(\vecb)u_i(c_j)+\eps & \text{if $b_i=\bot$}.
\end{cases}
$$ 
\item 
if voter $i$ is {\em truth-biased}, her utility 
in an election with ballot vector $\vecb$ under tie-breaking rule $R^X$       
is given by
$$
U_i(\vecb)=
\begin{cases}
\sum_{j\in [m]}p^X_j(\vecb)u_i(c_j) &\text{if $b_i\in C\setminus\{a_i\}$},\\
\sum_{j\in [m]}p^X_j(\vecb)u_i(c_j)+\eps &\text{if $b_i=a_i$},\\
-\infty&\text{if $b_i=\bot$}.
\end{cases}
$$
\end{itemize}
We consider settings where all voters are of the same type, i.e., 
either all voters are lazy or all voters are truth-biased;
we refer to these settings as {\em lazy} or {\em truth-biased},
respectively, and denote the former by $\calL$ and the latter by $\calT$.

In what follows, we consider all possible combinations of
settings ($\calL$, $\calT$) and tie-breaking rules
($R^L$, $R^C$, $R^V$). A combination of a setting $\calS\in\{\calL, \calT\}$,
a tie-breaking rule $R\in\{R^L, R^C, R^V\}$ and a utility
profile $\vecu$ induces a strategic game, which we will denote
by $(\calS, R, \vecu)$: in this game, the players are the voters, 
the action space of each player is $C\cup\{\bot\}$, and the players' utilities
$U_1, \dots, U_n$ for a vector of actions $\vecb$ are computed based on the setting
and the tie-breaking rule as described above. 
We say that a ballot vector $\vecb$
is a {\em pure Nash equilibrium (PNE)} of the game $(\calS, R, \vecu)$
if $U_i(\vecb)\ge U_i(\vecb_{-i},b')$ for every voter $i\in N$
and every $b'\in C\cup\{\bot\}$. 

For each setting $\calS\in\{\calL, \calT\}$ and each tie-breaking rule
$R\in\{R^L, R^C,R^V\}$, we define three algorithmic problems, 
which we call $(\calS, R)$-{\sc ExistNE}, $(\calS, R)$-{\sc TieNE},
and $(\calS, R)$-{\sc SingleNE}.
In each of these problems, we are given a candidate set $C$, $|C|=m$, 
a voter set $N$, $|N|=n$, and a utility vector $\vecu=(u_1, \dots, u_n)$, 
where each $u_i$ is represented by $m$ numbers $u_i(c_1), \dots, u_i(c_m)$;
these numbers are positive integers given in binary. In 
$(\calS, R)$-{\sc TieNE} and $(\calS, R)$-{\sc SingleNE} we are also given the name 
of a target candidate $c_p\in C$.
In $(\calS, R)$-{\sc ExistNE} we ask if $(\calS, R, \vecu)$
has a PNE. In $(\calS, R)$-{\sc TieNE} we ask if $(\calS, R, \vecu)$
has a PNE $\vecb$ with $|W(\vecb)|>1$ and $c_p\in W(\vecb)$.
In $(\calS, R)$-{\sc SingleNE} we ask if $(\calS, R, \vecu)$
has a PNE $\vecb$ with $W(\vecb)=\{c_p\}$.
Each of these problems is obviously in NP, as we can simply guess an appropriate 
ballot vector $\vecb$ and check that it is a PNE.

We omit some of the proofs due to space constraints;
these proofs can be found in the supplementary material.

\section{Lazy Voters}\label{sec:lazy}
\noindent
In this section, we study PNE in Plurality games with lazy voters.
The case where the tie-breaking rule is $R^C$ has been analyzed in detail
by Desmedt and Elkind~\cite{des-elk:c:eq}, 
albeit for a slightly different model; we complement their results by considering 
$R^L$ and~$R^V$.

We start by extending a result of Desmedt and Elkind
to all three tie-breaking rules considered in this paper.
\begin{proposition}\label{prop:lazy-basic}
For every $R\in\{R^L,R^C, R^V\}$ and every utility profile $\vecu$,
if a ballot vector $\vecb$ is a PNE of $(\calL,R,\vecu)$
then for every voter $i\in N$ either $b_i=\bot$ or $b_i\in W(\vecb)$.
Further, if $|W(\vecb)|=1$, then there exists exactly one voter $i\in N$
with $b_i\neq\bot$.
\end{proposition}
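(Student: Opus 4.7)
The plan is to prove both assertions by exhibiting a single beneficial abstention deviation in each case, leveraging the fact that a lazy voter collects an $\eps$ bonus whenever abstaining does not change the induced lottery.

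For the first assertion, I would suppose for contradiction that some voter $i$ has $b_i = c \notin W(\vecb)$ and consider the deviation $\vecb' = (\vecb_{-i}, \bot)$. Because $\score(c, \vecb) < M(\vecb)$, removing one of $c$'s votes cannot change $M$ or which candidates attain it, so $M(\vecb') = M(\vecb)$ and $W(\vecb') = W(\vecb)$. The key step is verifying that in fact the entire lottery is preserved under every tie-breaking rule. For $R^L$ and $R^C$ the lottery depends only on $W$, so this is immediate. Under $R^V$ the lottery depends on individual ballots, but voter $i$'s original ballot $c \notin W(\vecb)$ already triggers the elicitation branch and contributes $\frac{1}{n}$ probability to her favorite candidate in $W(\vecb)$; her new ballot $\bot$ likewise triggers the elicitation branch (for $W(\vecb') = W(\vecb)$) and contributes to the same candidate. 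Every other voter's contribution is unaffected. Hence $U_i(\vecb') = U_i(\vecb) + \eps > U_i(\vecb)$, violating PNE.

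For the second assertion, assume $W(\vecb) = \{c\}$. Since the trivial ballot gives $W = C$ (and $|C| \geq 2$), $\vecb$ is non-trivial, so at least one voter has $b_i \neq \bot$; by the first part, every such voter votes for $c$. Thus $\score(c, \vecb)$ equals the number $k$ of non-abstaining voters, while $\score(c', \vecb) = 0$ for every $c' \neq c$. If $k \geq 2$, let any voter $i$ with $b_i = c$ deviate to $\bot$: then $c$ has score $k - 1 \geq 1$, all other candidates still have score $0$, so $W$ and the lottery are unchanged while $i$ gains $\eps$, contradicting PNE. Hence $k = 1$.

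The main obstacle is the $R^V$ case of the first assertion, because under $R^V$ the lottery is sensitive to individual ballots rather than only to $W$; the resolution is to notice that within the elicitation branch a vote for a losing candidate (one outside $W$) and an abstention have exactly the same effect on the outcome, which is the mechanism that makes the single-deviation argument go through uniformly across all three tie-breaking rules.
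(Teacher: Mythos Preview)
Your proof is correct and follows the same abstention-deviation argument as the paper's own proof. You are in fact more careful than the paper on two minor points: you explicitly verify that under $R^V$ the lottery is unchanged by the deviation (the paper tacitly treats the outcome as depending only on $W(\vecb)$), and you supply the ``at least one'' direction of the second claim via $|C|\ge 2$, whereas the paper only argues ``at most one.''
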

\begin{proof}
Suppose 
that $b_i\not\in W(\vecb)$
for some voter $i\in N$. Then if $i$ changes her vote to $\bot$,
the set $W(\vecb)$ will not change,
so $i$'s utility would improve by $\eps$, a contradiction with $\vecb$
being a PNE  of $(\calL,R,\vecu)$.
Similarly, suppose that 
$|W(\vecb)|=1$ and there are two voters $i,i'\in N$
with $b_i\neq\bot$, $b_{i'}\neq\bot$.
It has to be the case that $b_i=b_{i'}=c_j$ for some $c_j\in C$, 
since otherwise $|W(\vecb)|\ge 1$. But then if voter $i$
changes her vote to $\bot$, $c_j$ will remain the election winner,
so $i$'s utility would improve by $\eps$, a contradiction.
\end{proof}

\smallskip

\noindent{\bf Lexicographic Tie-breaking\quad} 
The scenario where voters are lazy and
ties are broken lexicographically
turns out to be fairly easy to analyze.

\begin{theorem}\label{thm:lazy-lex-char}
For any utility profile $\vecu$ the game $G = (\calL, R^L, \vecu)$
has the following properties:
\begin{enumerate}
\item
If $\vecb$ is a PNE of $G$ then $|W(\vecb)|\in\{1,m\}$.
Moreover, $|W(\vecb)|=m$ if and only if $\vecb$ is the trivial ballot 
and all voters rank $c_1$ first. 
\item
If $\vecb$ is a PNE of $G$ then
there exists at most one voter $i$ with $b_i\neq \bot$.
\item
$G$ admits a PNE if and only if all voters 
rank $c_1$ first (in which case $c_1$ is the unique PNE winner)
or there exists a candidate $c_j$ with $j>1$ such that (i) $\score(c_j,\veca)>0$
and (ii) for every $k<j$ it holds that all voters prefer $c_j$ to $c_k$.
If such a candidate exists, he is unique, and wins in all PNE of $G$.
\end{enumerate}
\end{theorem}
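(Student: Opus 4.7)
The plan is to exploit the key simplification that under $R^L$ the lottery $\vecp^L(\vecb)$ is a point mass on the lexicographically smallest candidate of $W(\vecb)$, call it $c_j$. Consequently every voter's lottery utility is simply $u_i(c_j)$, so only deviations that change the identity of this lex winner can change the lottery payoff, and the only other effect a deviation can have is the $\eps$ laziness bonus.

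For part (1), I take a PNE $\vecb$ with $|W(\vecb)|>1$ and let $c_j$ be the lex winner. Any voter who votes for some $c_k\in W(\vecb)$ with $k\neq j$ could switch to $\bot$: $c_k$ drops out of $W$, but $c_j$ is still tied for top and still the lex winner, so the voter's lottery utility is unchanged and she pockets $\eps$, contradicting PNE. Combined with Proposition~\ref{prop:lazy-basic}, every non-abstaining voter is voting for $c_j$, forcing $\score(c_\ell,\vecb)=0$ for all other $c_\ell\in W(\vecb)$, hence $M(\vecb)=0$ and $\vecb$ is trivial, giving $W(\vecb)=C$ and $|W(\vecb)|=m$. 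It remains to verify the trivial ballot is a PNE iff all voters top-rank $c_1$: a deviation from $\bot$ to some $c\neq c_1$ flips the winner to $c$ (since $c_1$ still has score zero), which is profitable for voter $i$ exactly when $c\succ_i c_1$. Part (2) is then immediate, since if $|W(\vecb)|=m$ nobody votes and if $|W(\vecb)|=1$ Proposition~\ref{prop:lazy-basic} gives exactly one non-abstainer.

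For part (3), I analyze a PNE $\vecb$ with $|W(\vecb)|=1$, so a unique voter $i^*$ casts $b_{i^*}=c_j$. I check three conditions in turn: (a) $j>1$, else $i^*$ would strictly prefer to abstain (the lex winner of the trivial ballot is still $c_1=c_j$, and she gains $\eps$); (b) $c_j$ must be $i^*$'s top choice, otherwise deviating to vote for her favorite makes it the unique winner with score one and is strictly profitable; (c) every $i\neq i^*$ must prefer $c_j$ to every $c_k$ with $k<j$, since otherwise voting $c_k$ creates a two-way tie whose lex resolution is $c_k$, which is strictly better than her current utility $u_i(c_j)+\eps$ (utilities are distinct integers and $\eps<1$). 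These are the conditions in the statement. For the converse, given such a $c_j$, I pick any voter $i^*$ with $a_{i^*}=c_j$ and verify symmetrically that no deviation by $i^*$ or by any $i\neq i^*$ is profitable.

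The final piece is uniqueness and mutual exclusion: if $c_{j_1},c_{j_2}$ with $j_1<j_2$ both qualified, condition (ii) for $c_{j_2}$ would say every voter prefers $c_{j_2}$ to $c_{j_1}$, contradicting condition (i) for $c_{j_1}$ (some voter top-ranks it). The same argument shows the existence of such a $c_j$ is incompatible with everyone top-ranking $c_1$, so the two PNE regimes are mutually exclusive and the PNE winner is unique. I expect the only delicate step to be case (c) of part (3): the non-pivotal deviation to some $c_k$ with $k<j$ creates a new tie that lex-breaks against $c_j$, and this mechanism is what forces the ``prefer $c_j$ to every earlier candidate'' condition; it must be kept carefully separate from the $k>j$ subcase, where the lex winner stays $c_j$ and the only loss is the laziness bonus, so no constraint is imposed.
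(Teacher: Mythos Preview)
Your proof is correct and follows essentially the same approach as the paper's: use Proposition~\ref{prop:lazy-basic} to reduce to the trivial ballot or a single-voter ballot, then analyze the deviation constraints case by case. Your organization is slightly more explicit (e.g., invoking integer utilities and $\eps<1$ to justify that changing the lex winner beats the laziness bonus, and noting that $i^*$ automatically satisfies condition (ii) because $c_j=a_{i^*}$), but the argument is the same. One terminological slip: in your closing commentary you call the deviation to $c_k$ with $k<j$ ``non-pivotal,'' but of course it is pivotal---it changes the lex winner to $c_k$---which is precisely why it generates the constraint in (ii); your actual analysis in (c) handles this correctly.
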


The following corollary is directly implied by Theorem~\ref{thm:lazy-lex-char}.
 
\begin{corollary}\label{cor:lazy-lex-easy}
$(\calL, R^L)$-{\sc ExistNE}, $(\calL, R^L)$-{\sc SingleNE} and
$(\calL, R^L)$-{\sc TieNE}
are in~{\em P}.
\end{corollary}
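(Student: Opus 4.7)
The plan is to read off polynomial-time decision procedures directly from the three structural conditions in Theorem~\ref{thm:lazy-lex-char}. Since that theorem gives an exact combinatorial characterization of PNE existence and of the winning set, each of the three problems reduces to a condition that can be verified by a few scans of the preference profile, and nothing more subtle is needed.

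For $(\calL, R^L)$-\textsc{ExistNE}, I would first test whether every voter ranks $c_1$ at the top, which handles the trivial-ballot branch of part~3. Otherwise, I would iterate over $j \in \{2, \ldots, m\}$ and, for each $c_j$, check (i) whether some voter has $a_i = c_j$ and (ii) whether every voter prefers $c_j$ to each $c_k$ with $k < j$. Both checks amount to a single sweep over the $n$ preference orders, so the whole procedure runs in $O(nm^2)$ time; by part~3 the answer is yes iff either the first test or one of these loop iterations succeeds.

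For $(\calL, R^L)$-\textsc{SingleNE} with target $c_p$, part~1 rules out the trivial-ballot PNE (since there $|W(\vecb)| = m$), and the uniqueness clause in part~3 says that in any remaining PNE the winner is the unique $c_j$ witnessing conditions (i)--(ii); so I would run the \textsc{ExistNE} algorithm, recover that $c_j$ when it exists, and answer yes iff $c_j = c_p$. For $(\calL, R^L)$-\textsc{TieNE}, part~1 forces $|W(\vecb)| > 1$ to mean $|W(\vecb)| = m$ together with the trivial ballot and all voters ranking $c_1$ first, in which case $W(\vecb) = C$ and $c_p \in W(\vecb)$ holds automatically; the algorithm simply checks in $O(nm)$ time whether every voter tops her order with $c_1$. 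Since every substantive combinatorial step has already been done inside Theorem~\ref{thm:lazy-lex-char}, there is no real obstacle here; the corollary is a direct translation of the three parts of that theorem into three simple profile scans.
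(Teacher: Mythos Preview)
Your proposal is correct and matches the paper's approach: the paper itself gives no explicit proof, stating only that the corollary is ``directly implied by Theorem~\ref{thm:lazy-lex-char},'' and you have accurately spelled out the polynomial-time checks that this characterization yields for each of the three problems.
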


\begin{remark}\label{rem:lazy-lex}
{\em
The reader may observe that, counterintuitively, while the lexicographic tie-breaking
rule appears to favor $c_1$, it is impossible for $c_1$ to win the election
unless he is ranked first by all voters. In contrast, $c_2$ wins the election
as long as he is ranked first by at least one voter and no voter prefers $c_1$ to $c_2$.
In general, the lexicographic tie-breaking rule favors lower-numbered candidates
with the exception of $c_1$. As for $c_1$, his presence mostly has a destabilizing effect:
if some, but not all voters rank $c_1$ first, no PNE exists. 
%
This phenomenon is an artifact of our treatment of the trivial ballot vector:
it disappears if we assume (as Desmedt and Elkind do) that when $\vecb=(\bot,\dots,\bot)$
the election is declared invalid and the utility of each voter is $-\infty$: 
under this assumption $c_1$ is the unique possible equilibrium winner whenever he
is ranked first by at least one voter.  
}
\end{remark}

\smallskip

\noindent{\bf Randomized Tie-breaking\quad }
We will now consider 
$R^C$ and $R^V$. 
\cite{des-elk:c:eq} 
characterize utility profiles that admit a PNE for lazy voters and $R^C$. 
However, there is a small difference
between our model and that of Desmedt and Elkind: while we assume that
the trivial ballot vector results in a tie among all candidates, Desmedt and Elkind
assume that in this case the election is canceled and each voter's utility
is $-\infty$. Further, the results of Desmedt and Elkind implicitly assume
that the number of voters $n$ exceeds the number of candidates $m$;
if this is not the case, Theorem~2 in their paper is incorrect
(see Remark~\ref{rem:corr}).

Thus, we will now provide a full characterization of utility profiles $\vecu$
such that $(\calL, R^C,\vecu)$ admits a PNE, and describe the corresponding  
equilibrium ballot profiles. 
Our characterization  
result remains essentially unchanged if we replace $R^C$ with $R^V$: 
for almost all utility profiles $\vecu$ and ballot vectors $\vecb$
it holds that $\vecb$ is a PNE of $(\calL, R^C,\vecu)$ if and only if it is a PNE
of $(\calL, R^V,\vecu)$; the only exception is the case of full consensus
(all voters rank the same candidate first).

\begin{theorem}\label{thm:char-rand}
Let $\vecu=(u_1, \dots, u_n)$ be a utility profile over $C$, $|C|=m$,
and let $R\in\{R^C, R^V\}$.
The game $G = (\calL, R, \vecu)$ admits a PNE
if and only if one of the following conditions holds:
\begin{itemize}
\item[(1)]
all voters rank some candidate $c_j$ first;
\item[(2)]
each candidate is ranked first by at most one voter, and, moreover,
$\frac{1}{n}\sum_{i\in N}u_\ell(a_i)\ge \max_{i\in N\setminus\{\ell\}}u_\ell(a_i)$
for each $\ell\in N$.
\item[(3)]
there exists a set of candidates
$X = \{c_{\ell_1}, \dots, c_{\ell_k}\}$ with $2\le k \le \min(n/2,m)$
and a partition of the voters
into $k$ groups $N_1, \dots, N_k$ of size ${n}/{k}$ each such that for each
$j\in[k]$ and each $i\in N_j$ we have $c_{\ell_j}\succ_i c$
for all $c\in X\setminus\{c_{\ell_j}\}$, and, moreover, 
$\frac{1}{k}\sum_{c\in X}u_i(c)\ge \max_{c\in X\setminus\{c_{\ell_j}\}}u_i(c)$.
\end{itemize}
Further, if condition (1) holds for some $c_j\in C$,
then if $R=R^C$ then for each $i\in N$ the game $G$ has a PNE
where $i$ votes for $c_j$ and all other voters abstain, 
whereas if $R=R^V$ the game $G$ has a PNE where all voters abstain;
if condition (2) holds, then $G$ has a PNE
where each voter votes for her top candidate;
and
if condition (3) holds for some set $X$, then $G$ has a PNE
where each voter votes for her favorite candidate in $X$.
The game $G$ has no other PNE.
\end{theorem}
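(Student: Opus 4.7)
The plan is to prove sufficiency (each condition yields the advertised PNE) and necessity (every PNE arises in one of these three ways) separately. The main tool throughout is Proposition~\ref{prop:lazy-basic}: in any PNE every voter either abstains or votes for a candidate in $W$, and if $|W(\vecb)|=1$ then exactly one voter is active. The other recurring trick is that utilities are positive integers while $\eps<\min\{1/m,1/n\}$, so integer utility gaps always dominate $\eps$ in any comparison between rational utility expressions whose denominators are at most $\max(m,n)$.

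For sufficiency I exhibit the claimed ballot under each condition and rule out every deviation. Under condition~(1), $R^V$ uses the trivial ballot (the random-voter lottery is already degenerate at $c_j$ by consensus), while $R^C$ uses any ballot in which a single voter votes for $c_j$ and everyone else abstains (so $c_j$ wins deterministically); in both cases the possible deviations either keep $c_j$ as winner (losing $\eps$) or strictly worsen the winner. Under condition~(2), the ballot $\vecb=\veca$ yields the uniform lottery over $\{a_1,\dots,a_n\}$ under both rules; the inequality in~(2) rules out switching to another top, abstention is ruled out by an integer-gap argument, and voting for a candidate outside $W$ trades $a_\ell$ for a strictly worse candidate. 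Under condition~(3), the partition ballot gives each $c_{\ell_j}\in X$ exactly $n/k$ votes, so $W=X$ and both rules produce the uniform lottery over $X$; the inequality in~(3) handles switching inside $X$, the group-top property handles abstention, and voting outside $X$ is strictly worse by the same integer-gap analysis.

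For necessity, let $\vecb$ be a PNE. I split on $|W(\vecb)|$. If $|W|=1$, Proposition~\ref{prop:lazy-basic} isolates a unique active voter $i$ voting for some $c$; any other voter $i'$ with $a_{i'}\neq c$ could vote for her top to create a two-way tie, and a short calculation (identical for $R^C$ and $R^V$ once one evaluates the $R^V$ lottery on the tie) shows that this strictly improves $i'$'s utility, so consensus on $c$ is forced --- exactly condition~(1). Note that under $R^V$ with consensus the lone active voter would strictly prefer abstaining, so this PNE is realized only under $R^C$. If $\vecb$ is the trivial ballot, then under $R^C$ any voter strictly improves by voting for her top (integer gap vs $\eps$), so trivial is not a PNE there; under $R^V$ the lottery is the empirical top distribution, and non-deviation forces every voter's top to carry the entire mass, again yielding condition~(1).

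The heart of the argument --- and the main obstacle --- is the case $|W|=k\geq 2$ with every winner receiving $s\geq 1$ votes, because \emph{a priori} under $R^V$ the lottery depends on abstainers' preferences on $W$ and need not be uniform. I sidestep this by first showing that no abstainer can exist. Indeed, if voter $i$ abstained, non-deviation against voting for each $c\in W$ would give $\sum_{c\in W}p_c u_i(c)+\eps\geq u_i(c^*)$ where $c^*$ is $i$'s favorite in $W$; by integer-distinctness this collapses to $1-p_{c^*}\leq\eps$. But each $c\in W$ received $s\geq 1$ votes, so $p_c\geq s/n\geq 1/n$ under both rules, giving $1-p_{c^*}\geq(k-1)/n\geq 1/n>\eps$, a contradiction. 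With no abstainers the $R^V$ and $R^C$ lotteries coincide (uniform over $W$), so the rest of the analysis is identical under the two rules: $ks=n$; if $s=1$ then $k=n$ and the switch-to-another-top deviation yields exactly condition~(2); if $s\geq 2$ then $k\leq n/2$, the abstain-deviation forces $c_{\ell_j}$ to be $N_j$'s favorite on $X=W$, and the switch-inside-$X$ deviation produces the inequality in condition~(3). This uniform treatment of $R^V$ and $R^C$ is the reason the two rules yield essentially the same characterization.
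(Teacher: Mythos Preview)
Your argument follows essentially the same route as the paper's: split on $|W(\vecb)|$, use Proposition~\ref{prop:lazy-basic} for the singleton case, handle the trivial ballot separately for $R^C$ versus $R^V$, and in the multi-winner case first eliminate abstainers so that the $R^V$ and $R^C$ lotteries collapse to the same uniform lottery. Your integer-gap bookkeeping against $\eps$ is more explicit than the paper's, which is good.

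One point needs correcting. In the necessity direction with $s\ge 2$ you write that ``the abstain-deviation forces $c_{\ell_j}$ to be $N_j$'s favorite on $X=W$.'' This is not what the abstain-deviation gives: if $i\in N_j$ abstains, non-deviation yields only $u_i(c_{\ell_j}) > \frac{1}{k-1}\sum_{c\in W\setminus\{c_{\ell_j}\}}u_i(c)$, i.e.\ $c_{\ell_j}$ beats the \emph{average} of the rest, not the maximum. (Take $W=\{a,b,c\}$ with $u_i$-values $10,1,7$ and $b_i=c$: abstaining is unprofitable, yet $c$ is not $i$'s favorite.) The property you need actually falls out of the switch-inside-$X$ deviation you already invoke: non-deviation against switching to any $c\in W\setminus\{c_{\ell_j}\}$ gives $\frac{1}{k}\sum_{c'\in W}u_i(c')\ge u_i(c)$, hence $\frac{1}{k}\sum_{c'\in W}u_i(c')\ge \max_{c\ne c_{\ell_j}}u_i(c)$, and since the left side is strictly below $\max_{c\in W}u_i(c)$, this forces $c_{\ell_j}$ to be that maximum. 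So the switch-deviation simultaneously delivers both the ``favorite'' property and the inequality in condition~(3); the abstain-deviation is not needed here. The paper argues it exactly this way. A similarly terse spot is the $s=1$ case: you should state explicitly why $b_i=a_i$ (switch to $a_i$ either makes $a_i$ the unique winner if $a_i\in W$, or swaps $b_i$ for $a_i$ in $W$ with a strict gain of $(u_i(a_i)-u_i(b_i))/n$ otherwise) before invoking the inequality in condition~(2).
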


\begin{remark}\label{rem:corr}
{\em
Desmedt and Elkind claim (Theorems 1 and 2) that 
for $R^C$ and lazy voters 
a PNE exists
if and only if the utility profile satisfies either condition (1)
or condition (3) with constraint $k\le n/2$ removed. To see why this
is incorrect, consider a $2$-voter election 
over the candidate set $C=\{x,y, z\}$, where voters' utility functions are consistent
with preference orders $x\succ y\succ z$ and $x\succ z\succ y$, respectively.
According to Desmedt and Elkind, the ballot vector $(y, z)$ is a PNE of 
the corresponding game. This is obviously not true:
each of the voters would prefer to change her vote to $x$.
Note, however, that the two characterizations differ only when $m\ge n$, 
and in practice the number of voters usually exceeds the number of candidates. 
}
\end{remark}

Desmedt and Elkind show that checking condition (3) of Theorem~\ref{thm:char-rand}
is NP-hard; in their proof $n>m$, and the proof does not depend on how the trivial
ballot is handled. Further, their proof shows that checking whether
a given candidate belongs to some such set $X$ is also NP-hard.
On the other hand, Theorem~\ref{thm:char-rand} shows that PNE
with singleton winning sets only arise if some candidate is unanimously ranked first, 
and this condition is easy to check. 
We summarize these observations as follows.

\begin{corollary}\label{cor:lazy-rand-hard}
For $R\in\{R^C, R^V\}$, the problems
$(\calL, R)$-{\sc ExistNE} and $(\calL, R)$-{\sc TieNE}
are {\em NP}-complete, whereas
$(\calL, R)$-{\sc SingleNE} is in {\em P}.
\end{corollary}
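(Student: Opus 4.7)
The overall approach is to combine Theorem~\ref{thm:char-rand} with the NP-hardness reductions of Desmedt and Elkind cited immediately before the corollary. Membership of all three problems in NP is already observed in the preliminaries (guess a ballot vector and verify in polynomial time), so only NP-hardness and the polynomial-time algorithm for {\sc SingleNE} remain. For $(\calL,R)$-{\sc SingleNE}, I would appeal directly to Theorem~\ref{thm:char-rand}: PNE arising from condition (2) have winning set of size $n\ge 2$, and PNE arising from condition (3) have winning set of size $k\ge 2$, so a PNE $\vecb$ with $W(\vecb)=\{c_p\}$ can exist only under condition (1). Hence the algorithm simply checks whether $a_i=c_p$ for every $i\in N$, which is clearly polynomial.

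For $(\calL, R^C)$-{\sc ExistNE}, I would invoke the Desmedt--Elkind reduction, which produces instances in which checking condition (3) of Theorem~\ref{thm:char-rand} is NP-hard and in which $n>m$. The key step is to argue that for such instances the other two conditions are either impossible or trivially decidable: condition (2) requires each candidate to be ranked first by at most one voter, which fails by the pigeonhole principle whenever $n>m$, and condition (1) can be checked in polynomial time and will either resolve the instance outright or be absent from the constructed reduction. Thus, in the Desmedt--Elkind instances, a PNE exists iff condition (3) holds, and NP-hardness transfers. For $(\calL, R^C)$-{\sc TieNE}, I would use the strengthening of their reduction, also noted in the excerpt, which shows that deciding whether a given candidate belongs to some witness set $X$ for condition (3) is NP-hard. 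Since condition (2) is again excluded by $n>m$ and condition (1) yields only singleton winning sets, any PNE with $|W(\vecb)|>1$ and $c_p\in W(\vecb)$ must come from condition (3) with $c_p\in X$, giving the reduction directly.

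To extend hardness to $R^V$, I would rely on the statement of Theorem~\ref{thm:char-rand} that its characterization of PNE is identical for $R^C$ and $R^V$ except in the full-consensus case (condition (1)). Since the Desmedt--Elkind instances can be taken to not satisfy condition (1) (this would only give an easy ``yes'' instance and is trivially detectable, so one may preprocess it away), the same reduction works verbatim for $R^V$. The main obstacle I anticipate is verifying carefully that the Desmedt--Elkind construction genuinely avoids conditions (1) and (2) and that the correspondence between condition-(3) witness sets and equilibrium ballots is preserved under both tie-breaking rules; in particular, for {\sc TieNE} one must confirm that the target candidate $c_p$ lies in the winning set of some PNE iff it lies in some witness set $X$ for condition (3), rather than in merely some auxiliary role in the reduction. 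This is a matter of transcription rather than new technique, so the corollary follows once the bookkeeping is done.
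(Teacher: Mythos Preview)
Your proposal is correct and follows essentially the same approach as the paper: the paper likewise derives {\sc SingleNE}$\in$P from the fact that singleton-winner PNE arise only under condition~(1) of Theorem~\ref{thm:char-rand}, and derives NP-hardness of {\sc ExistNE} and {\sc TieNE} from the Desmedt--Elkind reduction (noting $n>m$ so condition~(2) is vacuous by pigeonhole, and that the reduction also shows hardness of deciding membership of a given candidate in some witness set $X$). Your observation that the characterization is identical for $R^C$ and $R^V$ outside the full-consensus case is exactly how the paper transfers hardness to $R^V$.
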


\section{Truth-biased Voters}\label{sec:truth}
\noindent
For truth-biased voters, our exposition follows the same pattern as for lazy voters:
we present some general observations, followed by a quick summary of the results
for lexicographic tie-breaking, and conclude by analyzing randomized tie-breaking.
The following result is similar in spirit to Proposition~\ref{prop:lazy-basic}.

\begin{proposition}\label{prop:truth-basic}
For every $R\in\{R^L,R^C, R^V\}$ and every utility profile $\vecu$,
if a ballot vector $\vecb$ is a PNE of $(\calT,R,\vecu)$
then for every voter $i\in N$ either $b_i=a_i$, or $b_i\in W(\vecb)$.
\end{proposition}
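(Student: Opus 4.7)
The plan is to argue by contradiction. First, no PNE can have any $b_i=\bot$, since abstention yields $-\infty$ for a truth-biased voter and voting for $a_i$ (or indeed any candidate) strictly improves her utility; so we may assume $b_i\in C$ for all $i$. Now suppose, toward a contradiction, that some voter $i$ has $b_i\in C\setminus\{a_i\}$ and $b_i\notin W(\vecb)$. I will show that the deviation $\vecb'=(\vecb_{-i},a_i)$ strictly improves $U_i$. Under $\vecb$ voter $i$ receives no truthfulness bonus, whereas under $\vecb'$ she does, so it suffices to show that the lottery expected utility under $\vecp^R(\vecb')$ is at least that under $\vecp^R(\vecb)$.

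Since $b_i\notin W(\vecb)$, decrementing $\score(b_i,\cdot)$ by one removes no candidate from the winning set; only the increment to $\score(a_i,\cdot)$ can change $W$. I would case on $\score(a_i,\vecb)$. If $\score(a_i,\vecb)=M(\vecb)$, i.e., $a_i\in W(\vecb)$, then $\score(a_i,\vecb')=M(\vecb)+1$, so $W(\vecb')=\{a_i\}$; the degenerate lottery at $a_i$ is at least as good for $i$ as $\vecp^R(\vecb)$, since $a_i$ is her top choice. If $\score(a_i,\vecb)=M(\vecb)-1$, then $W(\vecb')=W(\vecb)\cup\{a_i\}$: for $R^L$ the new lex-min is either $a_i$ or the old one, both weakly preferred by $i$; for $R^C$, writing $k=|W(\vecb)|$ and $S=\sum_{c\in W(\vecb)}u_i(c)$, the new expected utility $(u_i(a_i)+S)/(k+1)$ exceeds $S/k$ because $u_i(a_i)>u_i(c)$ for every $c\in W(\vecb)$; for $R^V$ I argue below. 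If $\score(a_i,\vecb)<M(\vecb)-1$, then $W(\vecb')=W(\vecb)$; under $R^L$ and $R^C$ the lottery is literally unchanged, and under $R^V$ voter $i$'s contribution is her top in $W$ under both ballots (since $b_i,a_i\notin W$), so the full lottery is unchanged as well.

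The main obstacle is the $R^V$ analysis in the middle case, where enlarging the winning set changes how multiple voters' ballots are resolved by the elicitation step. I would decompose $\vecp^V(\vecb)=\frac{1}{n}\sum_{j\in N}q_j(\vecb)$, where $q_j(\vecb)$ is the distribution over $C$ induced by selecting voter $j$, and verify voter by voter that the only possible change from $q_j(\vecb)$ to $q_j(\vecb')$ is a transfer of probability mass from some candidate in $W(\vecb)$ onto $a_i$: for $j\neq i$ with $b_j\in W(\vecb)$ nothing changes; for such $j$ with $b_j\notin W(\vecb)\cup\{a_i\}$, the elicitation step may now return $a_i$ instead of $j$'s former top in $W(\vecb)$; for $j$ with $b_j=a_i$, the ballot moves from elicitation to a direct vote for $a_i$; and for voter $i$ herself, mass $1/n$ moves from her top in $W(\vecb)$ onto $a_i$. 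Since voter $i$ strictly prefers $a_i$ to every candidate in $W(\vecb)$, each such shift weakly improves her expected utility (strictly, thanks to her own contribution). Combined with the $\eps$ truthfulness bonus, this contradicts $\vecb$ being a PNE, completing the argument.
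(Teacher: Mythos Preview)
Your proof is correct and follows essentially the same approach as the paper's: both argue by contradiction using the deviation to $a_i$ and case on how $\score(a_i,\vecb)$ compares to $M(\vecb)$. The paper's argument is considerably terser---it simply asserts that when the winning set becomes $W(\vecb)\cup\{a_i\}$ the voter's utility ``increases at least by $\eps$'' without spelling out the rule-by-rule justification---whereas you carefully verify this for each of $R^L$, $R^C$, and $R^V$, in particular giving the voter-by-voter decomposition for $R^V$ that the paper leaves implicit.
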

\begin{proof}
Consider a voter $i\in N$ such that $a_i\neq b_i$ and $b_i\not\in W(\vecb)$. 
Suppose $a_i\not\in W(\vecb)$. Then, if $i$ changes her vote to $a_i$, the new winning set 
is either $W(\vecb)$ or $W(\vecb)\cup\{a_i\}$. In either case, $i$'s
utility increases at least by $\eps$, a contradiction. Suppose now that $a_i\in W(\vecb)$. This means that either $W(\vecb) = \{a_i\}$ or $a_i$ 
is in a tie with other candidates under $\vecb$. Then, if $i$ votes for $a_i$, the new winning set is just 
$\{a_i\}$, so $i$'s utility increases by at least $\eps$, a contradiction again.
\end{proof}


\smallskip

\noindent{\bf Lexicographic Tie-breaking\quad}
Obraztsova et al.~\cite{obr-mar-tho:c:truth-biased} 
characterize the PNE of the game $(\calT, R^L, \vecu)$.
As their characterization is quite complex, 
we will not reproduce it here. However, for the purposes of comparison 
with the lazy voters model, we will use the following description
of {\em truthful} equilibria given by Obraztsova et al.

\begin{proposition}[Obraztsova et al., Theorem 1]\label{prop:truth-lex}
Consider a utility profile $\vecu$, let $\veca$ be the respective truthful ballot vector, 
and let $j=\min\{r\mid c_r\in  W(\veca)\}$.
Then $\veca$ is a PNE of $(\calT, R^L, \vecu)$
if and only if neither of the following conditions holds:
\begin{itemize}
\item[(1)]
$|W(\veca)|>1$,
and there exists a candidate $c_k\in W(\veca)$
and a voter $i$ such that $a_i\neq c_k$ and $c_k\succ_i c_j$.
\item[(2)]
$H(\veca)\neq\emptyset$, and there exists a candidate $c_k\in H(\veca)$
and a voter $i$ such that $a_i\neq c_k$, $c_k\succ_i c_j$, and $k<j$.
\end{itemize}
\end{proposition}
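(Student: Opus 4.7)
The plan is to characterize profitable unilateral deviations from $\veca$ under $R^L$. Since $R^L$ produces the single winner $c_j$ at $\veca$, voter $i$'s payoff is $u_i(c_j)+\eps$; deviating to any $b'\in C\setminus\{a_i\}$ gives $u_i(c^*)$ for the new lex-winner $c^*$, and deviating to $\bot$ gives $-\infty$. Because $\eps<1$ and the utilities are positive integers, $u_i(c^*)>u_i(c_j)+\eps$ iff $c^*\succ_i c_j$. Hence $\veca$ is a PNE iff for every voter $i$ and every candidate $c_k\in C\setminus\{a_i\}$, the lex-winner of $(\veca_{-i},c_k)$ is not strictly preferred by $i$ to $c_j$. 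I would then split on the score bucket of $c_k$ under $\veca$.

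If $c_k\in W(\veca)$, then after $i$ switches to $c_k$, $c_k$ has score $M(\veca)+1$ and is the unique winner, so $c^*=c_k$. The deviation is profitable iff $c_k\succ_i c_j$; here $|W(\veca)|\ge 2$ is automatic since $c_j,c_k\in W(\veca)$ are distinct. This is precisely clause~(1). If $c_k\in H(\veca)$, then $c_k$ climbs to score $M(\veca)$ while $a_i$ drops by one; the new winning set is $W(\veca)\cup\{c_k\}$ when $a_i\notin W(\veca)$, and $(W(\veca)\setminus\{a_i\})\cup\{c_k\}$ when $a_i\in W(\veca)$, and $c^*$ is its lex-min. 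Since $a_i$ is $i$'s top choice, $a_i=c_j$ would make $c_k\succ_i c_j$ impossible; hence a profitable deviation forces $a_i\neq c_j$, and in both sub-cases the new winning set still contains $c_j$, so the lex-min equals $c_k$ iff $k<j$. Thus the profitable deviation matches exactly clause~(2): $c_k\in H(\veca)$, $k<j$, $c_k\succ_i c_j$, $a_i\neq c_k$. Finally, if $c_k$ has score at most $M(\veca)-2$, then $c_k$'s new score stays below $M(\veca)$, so the only way the winner can change is if $a_i\in W(\veca)$ and the removal of $a_i$ alters the set $W(\veca)$; but then either $a_i\neq c_j$ (so $c_j$ stays in the new winning set and still wins), or $a_i=c_j$ (so every candidate in the collapsed new winning set loses to $a_i$ under $\succ_i$). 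So no profitable deviation arises in this bucket.

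Assembling the three buckets gives both directions of the biconditional. For the forward direction, if (1) or (2) holds then the witnessing voter $i$ has the explicit profitable deviation ``vote $c_k$,'' and the computations above show the new winner is $c_k\succ_i c_j$. For the converse, if $\veca$ is not a PNE then some voter $i$ and candidate $c_k\in C\setminus\{a_i\}$ yield a new winner $c^*\succ_i c_j$, and the case analysis by score bucket places $(i,c_k)$ into the template of clause~(1) or clause~(2).

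The main obstacle I expect is the $c_k\in H(\veca)$ bookkeeping, where the simultaneous $+1$ to $c_k$ and $-1$ to $a_i$ interact. When $a_i\in W(\veca)$ the lex-min base set shifts from $W(\veca)$ to $W(\veca)\setminus\{a_i\}$, and one must verify that the profitability threshold collapses cleanly to $k<j$; the subtle subcase $a_i=c_j$ is ruled out for free by the combination $c_k\succ_i c_j$ and $a_i$ being $i$'s top, but this needs to be checked in both sub-cases to avoid a spurious extra condition in clause~(2).
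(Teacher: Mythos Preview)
The paper does not supply its own proof of this proposition: it is quoted verbatim as Theorem~1 of Obraztsova et al.\ and used as a black box. So there is no in-paper argument to compare against.

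That said, your proposal is a correct self-contained proof. The reduction ``$u_i(c^*)>u_i(c_j)+\eps$ iff $c^*\succ_i c_j$'' is justified by the integrality of the $u_i$ and $\eps<1$, and the three-bucket case split on $\score(c_k,\veca)$ is exhaustive. The only delicate point you flagged---the $c_k\in H(\veca)$ case when $a_i\in W(\veca)$---is handled correctly: a profitable deviation forces $c_k\succ_i c_j$, hence $a_i\neq c_j$, so $c_j$ survives in the collapsed winning set and the lex-min becomes $c_k$ precisely when $k<j$. The low-score bucket is also fine, including the sub-case $|W(\veca)|=1$, $a_i=c_j$, where the new winning set may enlarge to $\{c_j\}\cup H(\veca)$ but every member is dominated by $a_i=c_j$ in $\succ_i$. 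Nothing is missing.
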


We will also state a crucial property of non-truthful PNE, identified by Obraztsova et al. 
For this, we first need the following definition.

\begin{definition}
\label{def:threshold}
Consider a ballot vector $\vecb$, where candidate $c_j$ 
is the winner under $R^L$. A candidate $c_k\neq c_j$ is called a 
{\em threshold 
candidate with respect 
to $\vecb$} if either (1) $k< j$ and $\score(c_k,\vecb)=\score(c_j,\vecb)-1$ or
               (2) $k> j$ and $\score(c_k,\vecb)=\score(c_j,\vecb)$.
We denote the set of threshold candidates with respect to $\vecb$ by $T(\vecb)$.
\end{definition}

That is, a threshold candidate is someone who could win the election if he had one additional vote. 
A feature of all non-truthful PNE is that there must exist at least one threshold candidate. The intuition for this is 
that, since voters who are not pivotal prefer to vote truthfully, in any PNE that arises under strategic voting,  
the winner receives just enough votes so as to beat the required threshold 
(as set by the threshold candidate) and not any more.

\begin{lemma}[Obraztsova et al., Lemma 2]
\label{lem:threshold}
Consider a utility profile $\vecu$, let $\veca$ be the respective truthful ballot vector, and
let $\vecb\neq\veca$ be a non-truthful PNE 
of $(\calT, R^L, \vecu)$. Then $T(\vecb)\neq\emptyset$. 
Further, $\score(c_k, \vecb)=\score(c_k, \veca)$
for every $c_k\in T(\vecb)$,
i.e., all voters whose top choice is $c_k$ vote 
for $c_k$.
\end{lemma}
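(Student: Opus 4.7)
The plan is to derive both claims by contradiction, constructing a profitable unilateral deviation whenever the desired property fails. Let $c_j$ be the winner of $\vecb$ under $R^L$, so $j=\min\{r:c_r\in W(\vecb)\}$.

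For the first claim, suppose $T(\vecb)=\emptyset$. Then the threshold definition forces $\score(c_\ell,\vecb)\le \score(c_j,\vecb)-2$ for every $\ell<j$ and $\score(c_\ell,\vecb)\le \score(c_j,\vecb)-1$ for every $\ell>j$; in particular $W(\vecb)=\{c_j\}$. Since $\vecb\neq\veca$, some voter $i$ has $b_i\neq a_i$, and Proposition~\ref{prop:truth-basic} forces $b_i\in W(\vecb)=\{c_j\}$, so $b_i=c_j$. Inspecting $\vecb'=(\vecb_{-i},a_i)$: the above score gaps, together with lex tie-breaking in $c_j$'s favor when $a_i>j$, imply that the new winner is either $c_j$ itself---in which case $i$ gains the truthfulness bonus $\eps$---or $a_i$---in which case $i$ moves to her top candidate and still gains $\eps$. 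Either outcome strictly increases $i$'s utility, a contradiction.

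For the second claim, fix $c_k\in T(\vecb)$. Failure of $\score(c_k,\vecb)=\score(c_k,\veca)$ means either (i) some voter $i$ with $a_i=c_k$ has $b_i\neq c_k$, or (ii) some voter $i$ with $a_i\neq c_k$ has $b_i=c_k$. I rule out each case. In (i), Proposition~\ref{prop:truth-basic} puts $b_i\in W(\vecb)$, so deviating to $c_k$ shifts one vote from a candidate of $W(\vecb)$ to $c_k$; the threshold condition ($\score(c_k,\vecb)=\score(c_j,\vecb)-1$ with $k<j$, or $c_k\in W(\vecb)$ with $k>j$) guarantees that $c_k$ becomes the new lex winner, so the deviator receives her top candidate plus $\eps$, strictly better than $u_i(c_j)$. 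In (ii), $k<j$ would place $c_k\notin W(\vecb)$ and contradict Proposition~\ref{prop:truth-basic}; so $k>j$ and $c_k\in W(\vecb)$. Consider $i$'s deviation to $a_i$: the score of $c_k$ drops to $M(\vecb)-1$ and $c_k$ leaves the winning set. A short case split on whether $\score(a_i,\vecb)+1$ is less than, equal to, or greater than $M(\vecb)$, combined with the index of $a_i$ relative to $j$, shows that the new lex winner is either $c_j$ again or $a_i$; in either case $i$ gains the $\eps$ bonus and possibly an upgrade to her top candidate, again contradicting the PNE assumption.

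The main obstacle is case (ii) of the second claim: several sub-cases must be handled to confirm that removing $c_k$ from the winning set does not inadvertently hand the election to a candidate that $i$ likes even less than $c_j$. Once this bookkeeping is done, the remaining parts follow directly from Proposition~\ref{prop:truth-basic} and the definition of $T(\vecb)$.
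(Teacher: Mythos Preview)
The paper does not supply its own proof of this lemma; it is quoted verbatim from Obraztsova et al.\ and used as a black box. There is therefore nothing in the present paper to compare your argument against.

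That said, your proof is correct. The first claim goes through exactly as you describe: when $T(\vecb)=\emptyset$, any non-truthful voter $i$ (who by Proposition~\ref{prop:truth-basic} must have $b_i=c_j$) can revert to $a_i$ and the resulting lex winner is either $c_j$ or $a_i$, in both cases giving $i$ a strict gain. For the second claim, your case (i) is clean, and your case (ii) sketch is also fine once one observes that after $i$ switches from $c_k$ to $a_i$, candidate $c_j$ keeps score $M(\vecb)$ (or rises to $M(\vecb)+1$ if $a_i=c_j$), so the new lex winner can only be $c_j$ or $a_i$; either way $i$ collects the truth bonus $\eps$ and possibly more. Note that by handling both (i) and (ii) you in fact prove the set equality $\{i:b_i=c_k\}=\{i:a_i=c_k\}$, which is slightly stronger than the score equality in the lemma statement and is what the ``i.e.'' clause actually asserts.
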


The existence of a threshold candidate is an important observation about the structure of non-truthful PNE,  
and we will use it repeatedly in the sequel. We note that the winner in $\veca$ 
need not necessarily be a threshold candidate in a non-truthful PNE $\vecb$. 

Obraztsova et al.~show that, given 
a candidate $c_p\in C$ and a score $s$, it is computationally hard to decide whether the game 
$(\calT, R^L, \vecu)$ has a PNE $\vecb$ where $c_p$ wins 
with a score of $s$. This problem may appear to be ``harder''
than $(\calT, R^L)$-{\sc TieNE} or $(\calT, R^L)$-{\sc SingleNE},
as one needs to ensure that $c_p$ obtains a specific score;
on the other hand, it does not distinguish between 
$c_p$ being the unique top-scorer or being tied with other candidates and 
winning due to tie-breaking. We now complement
this hardness result by showing that
all three problems we consider are NP-hard for $\calT$ and $R^L$.

\begin{theorem}\label{thm:truth-lex-hard}
$(\calT, R^L)$-{\sc SingleNE},
$(\calT, R^L)$-{\sc ExistNE}, and $(\calT, R^L)$-{\sc TieNE}
are {\em NP}-complete.
\end{theorem}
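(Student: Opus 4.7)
The plan is to establish NP-completeness by first noting that all three problems lie in NP (guess a ballot vector $\vecb$, then check in polynomial time that $W(\vecb)$ has the required form and that no single-voter deviation improves any $U_i(\vecb)$), and then reducing a known NP-hard problem to each of them. A natural candidate is \textsc{MaxIntersect}, where given families $\calA_1,\dots,\calA_k$ of size $m$ over an element set $\calE=\{e_1,\dots,e_n\}$ and an integer $q$, one must pick $A_i\in\calA_i$ with $|\bigcap A_i|\ge q$. The reason this is attractive is that Lemma~\ref{lem:threshold} already tells us that any non-truthful PNE is tightly constrained: the winner's score matches precisely what is needed to edge past some threshold candidate, and all voters whose top choice is a threshold candidate vote for it truthfully. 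This is exactly the kind of ``pick-one-set-per-family-and-agree-on-common-elements'' structure that \textsc{MaxIntersect} captures.

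For $(\calT,R^L)$-\textsc{SingleNE} I would build the following gadget. Introduce the target candidate $c_p$, a pool of ``element voters'' (one per $e\in\calE$, all truthfully top-ranking a common high-priority ``dummy'' candidate $d$) whose secondary preferences over $c_p$ and over additional ``set candidates'' $c_A$ are tuned so that each element voter $e$ is willing to deviate from $d$ to $c_p$ only if her preferred set candidate $c_A$ (with $e\in A$) is currently close to winning. Then for each family $\calA_i$ add a group of ``family voters'' whose role, via Proposition~\ref{prop:truth-lex}'s description of truthful non-equilibria, is to force exactly one set candidate per family to sit at the threshold in any PNE. Lexicographic priorities are chosen so that $c_p$ beats all candidates he might tie with, and the baseline scores are set so that $c_p$ becomes the unique winner with score $s^*$ precisely when $q$ element voters abandon $d$ in favor of $c_p$ in a way consistent with a choice of $A_i\in\calA_i$ from each family; Lemma~\ref{lem:threshold} pins down the bookkeeping by forcing every non-deviating element voter to vote truthfully.

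For $(\calT,R^L)$-\textsc{TieNE} I would adjust the construction by adding a ``shadow'' candidate $c_p'$ with higher lexicographic rank than $c_p$ and enough guaranteed supporters to keep him tied with $c_p$ exactly in the intended PNE; arranging utilities so $c_p'$ wins tie-breaking but $c_p\in W(\vecb)$ gives the desired reduction. For $(\calT,R^L)$-\textsc{ExistNE}, I would append a separate ``attacker'' gadget that destabilizes every candidate outside the image of the \textsc{MaxIntersect} solutions: a small set of voters with carefully layered preferences whose only stable votes occur when the rest of the electorate already realizes the \textsc{MaxIntersect} configuration. If no such configuration exists, the attacker gadget always has a profitable deviation, so no PNE survives; if one exists, the attacker gadget is at rest.

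The principal obstacle I anticipate is controlling extraneous PNE. Under $R^L$ the lexicographic order gives candidates subtle leverage: a candidate ranked just above $c_p$ with only one fewer vote can destroy stability by attracting a deviation, and, conversely, a low-priority threshold candidate can be exploited to manufacture an unintended equilibrium. Managing this requires choosing the priority order so that every candidate who could in principle act as a threshold (in the sense of Definition~\ref{def:threshold}) is either pinned by Lemma~\ref{lem:threshold} to its truthful score or excluded by a large utility gap that makes deviation profitable. Making the three reductions share enough of the gadget to avoid essentially redesigning them from scratch, while still delivering the three different yes-instance structures (unique winner, tied winner, existence), is likely to be the most delicate part of the argument.
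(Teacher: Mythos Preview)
Your high-level strategy---reduce from a subset-intersection problem and lean on Lemma~\ref{lem:threshold}---is exactly what the paper does. The gap is in the gadget: you encode sets as candidates and elements as voters, with unspecified ``family voters'' meant to push one set-candidate per family to threshold. That is precisely the step you yourself flag as the principal obstacle, and the proposal does not resolve it; with many set-candidates near the top, Lemma~\ref{lem:threshold} alone does not determine which of them sit at threshold, so you cannot yet rule out unintended PNE or guarantee the intended one.

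The paper's encoding is dual to yours and sidesteps this obstacle. Sets become \emph{voters} (one voter $v_i$ per set $A^i$, with preference $w_3\succ(\calE\setminus A^i)\succ w_2\succ A^i\succ w_1$) and elements become candidates; the target is $w_2$, and a dedicated block of $w_1$-supporters makes $w_1$ the \emph{unique} threshold candidate in any PNE where $w_2$ wins (every other candidate $c$ is ruled out by exhibiting a voter currently voting for $w_2$ who prefers $c$ and would defect if $c$ were one vote from winning). Once $T(\vecb)=\{w_1\}$ is forced, the score of $w_2$ is pinned, and counting shows at least $k$ of the $v_i$ must have deviated to $w_2$; each such $v_i$ must then prefer $w_2$ to every element still at score $s-2$, i.e.\ those elements lie in $A^i$, recovering the intersection. \textsc{TieNE} and \textsc{ExistNE} are then obtained by small local edits to this single construction (swap the $w_1/w_2$ tie-breaking order and add one voter; clone $w_3$ and adjust~$s$) rather than via separate shadow and attacker gadgets.
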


The proof is by a reduction
from {\sc Maximum $k$-Subset Intersection (MSI)} (see the supplementary material).
Surprisingly, the complexity of MSI was very recently posed as an open problem  
by Clifford and Popa~\cite{cli-pop:j:subset}; 
subsequently, MSI was shown to be hard under Cook reductions by
Xavier~\cite{xav:j:subset}. 
Here we first establish NP-hardness of MSI under Karp reductions, 
which may be of independent interest, and then show NP-hardness of our problems 
by constructing reductions from MSI. 

\smallskip

\noindent{\bf Randomized Tie-breaking\ }
It turns out that
for truth-biased voters the tie-breaking rules
$R^C$ and $R^V$ induce identical behavior by the voters;
unlike for lazy voters, this holds
even if all voters rank the same candidate first.

For clarity, we present our characterization result
for randomized tie-breaking in three parts. 
We start by considering PNE with winning sets of size at least $2$;
the analysis for this case turns out to be very similar to  
that for lazy voters.

\begin{theorem}\label{thm:truth-rand-ties}
Let $\vecu=(u_1, \dots, u_n)$ be a utility profile over $C$, $|C|=m$,
and let $R\in\{R^C, R^V\}$.
The game $G = (\calT, R, \vecu)$ admits a PNE with a winning set of size at least $2$
if and only if one of the following conditions holds:
\begin{itemize}
\item[(1)]
each candidate is ranked first by at most one voter, and, moreover,
$\frac{1}{n}\sum_{i\in N}u_\ell(a_i)\ge \max_{i\in N\setminus\{\ell\}}u_\ell(a_i)$
for each $\ell\in N$.
\item[(2)]
there exists a set of candidates
$X = \{c_{\ell_1}, \dots, c_{\ell_k}\}$ with $2\le k \le \min(n/2,m)$
and a partitioning of the voters
into $k$ groups $N_1, \dots, N_k$, of size ${n}/{k}$ each, such that for each
$j\in[k]$ 
and each $i\in N_j$, we have $c_{\ell_j}\succ_i c$
for all $c\in X\setminus\{c_{\ell_j}\}$, and, moreover,
$\frac{1}{k}\sum_{c\in X}u_i(c)\ge \max_{c\in X\setminus\{c_{\ell_j}\}}u_i(c)$.
\end{itemize}
Further, 
if condition (1) holds, then $G$ has a PNE
where each voter votes for her top candidate,
and
if condition (2) holds for some $X$, then $G$ has a PNE
where each voter votes for her favorite candidate in $X$.
The game $G$ has no other PNE.
\end{theorem}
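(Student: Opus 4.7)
The plan is to mirror the approach used for Theorem~\ref{thm:char-rand} (the lazy-voter analogue), adapting the deviation analysis to the truth-bias bonus $\eps$ and the fact that truth-biased voters never abstain. I would prove the two conditions sufficient by exhibiting the claimed PNE and ruling out profitable deviations, and necessary by extracting the required structure from any PNE $\vecb$ with $|W(\vecb)|\ge 2$.

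For sufficiency under condition~(1), the top choices $a_1,\dots,a_n$ are distinct, so the truthful ballot $\veca$ yields $W(\veca)=\{a_1,\dots,a_n\}$ with every candidate receiving exactly one vote. Under both $R^C$ and $R^V$ the induced lottery $\vecp^X(\veca)$ is uniform on $W(\veca)$: for $R^V$, every voter's ballot lies in $W(\veca)$, so each winner is selected with probability $1/n$. A deviation by voter $\ell$ to some $a_j$ with $j\neq\ell$ makes $a_j$ the unique winner, so the stated inequality $\tfrac{1}{n}\sum_{i}u_\ell(a_i)\ge\max_{i\neq\ell}u_\ell(a_i)$ is exactly what is needed; deviations to a candidate outside $\{a_1,\dots,a_n\}$ can be dismissed by a short calculation using $\eps<1/n$ and integrality of utilities. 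The argument under condition~(2) is analogous: voting for one's favourite in $X$ gives each $c\in X$ exactly $n/k$ votes, the induced lottery is uniform on $X$ under both rules, the stated averaging inequality rules out deviations within $X$, and a gap argument dismisses deviations outside $X$.

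For necessity, fix a PNE $\vecb$ with $|W(\vecb)|=k\ge 2$ and establish four structural properties. (a)~If $a_i\in W(\vecb)$ then $b_i=a_i$, since otherwise switching to $a_i$ makes it the unique winner and adds the truth-bias bonus. (b)~Every voter votes for a candidate in $W(\vecb)$: by Proposition~\ref{prop:truth-basic} the only remaining case is $b_i=a_i\notin W(\vecb)$, but then switching to $i$'s favourite $c^*\in W(\vecb)$ makes $c^*$ the unique winner, and the utility gain $u_i(c^*)-\sum_{j}p^X_j(\vecb)u_i(c_j)-\eps$ is at least $(1-p^X_{c^*}(\vecb))-\eps$ because utilities are distinct positive integers; this is strictly positive since $1-p^X_{c^*}(\vecb)\ge 1/m$ under $R^C$ and $\ge 1/n$ under $R^V$, both of which exceed $\eps$. (c)~An identical gap argument shows each voter votes for her favourite in $W(\vecb)$; hence every winner receives exactly $n/k$ votes, forcing $k\mid n$ and so either $k=n$ or $k\le n/2$. (d)~Once (b) and (c) hold, the lottery is uniform on $W(\vecb)$ under both $R^C$ and $R^V$.

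The remaining PNE inequalities say that voter $i$ cannot gain by switching to some $c\in W(\vecb)\setminus\{b_i\}$, which would make $c$ the unique winner; they read $(1/k)\sum_{c'\in W(\vecb)}u_i(c')+[b_i=a_i]\,\eps\ge u_i(c)$, and because utilities are integers and $\eps<1/k$ this is equivalent to the averaging inequality in the statement. The case $k=n$ forces $W(\vecb)=\{a_1,\dots,a_n\}$ with distinct top choices, giving condition~(1); the case $k\le n/2$ yields condition~(2) with $X=W(\vecb)$; deviations to candidates outside $W(\vecb)$ are absorbed by the bounds already established. The main obstacle is property~(b): one must combine integrality of utilities with the bound $\eps<\min(1/m,1/n)$ to guarantee strict profitability, which is particularly delicate under $R^V$ where the lottery is not yet known to be uniform.
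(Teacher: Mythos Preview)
Your proposal is correct and follows essentially the same route as the paper: reduce to the lazy-voter analysis of Theorem~\ref{thm:char-rand}, show that in any PNE with $|W(\vecb)|\ge 2$ every voter must vote inside $W(\vecb)$ for her favourite there, and then read off condition~(1) or~(2) according to whether $|W(\vecb)|=n$ or $|W(\vecb)|<n$. The paper's own proof is terser---it simply asserts ``$i$ can profitably deviate'' when $b_i\notin W(\vecb)$ and defers to the proof of Theorem~\ref{thm:char-rand}---whereas you make explicit the gap argument using integrality of utilities and $\eps<\min(1/m,1/n)$; this extra care is warranted, since without it the truth-bias bonus could in principle block the deviation, and it is exactly the point you flag as ``delicate'' under $R^V$.
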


The case where the winning set is a singleton is surprisingly complicated.
We will first characterize utility profiles that admit a truthful PNE
with this property.

\begin{theorem}\label{thm:truth-rand-single1}
Let $\vecu=(u_1, \dots, u_n)$ be a utility profile over $C$,
let $R\in\{R^C, R^V\}$, and suppose that $W(\veca)=\{c_j\}$ for some $c_j\in C$.
Then $\veca$ is a PNE of the game $G = (\calT, R, \vecu)$
if and only if for every $i\in N$ and every $c_k\in H(\veca)\setminus\{a_i\}$,
it holds that $c_j\succ_i c_k$.
\end{theorem}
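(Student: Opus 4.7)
The plan is to prove both directions, using throughout that truth-biased voters get an $\eps$ bonus for voting for $a_i$ and get $-\infty$ from abstaining, so in $\veca$ voter $i$'s utility equals $u_i(c_j)+\eps$. Also, since utilities are distinct positive integers, $c\succ_i c'$ implies $u_i(c)-u_i(c')\ge 1$, which is comfortably larger than both $2\eps$ and $1/n$, since $\eps<\min(1/m,1/n)$. Let $M=M(\veca)$, so $c_j$ has truthful score $M$ and every candidate in $H(\veca)$ has score $M-1$.

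For \emph{necessity} (contrapositive), assume there are $i$ and $c_k\in H(\veca)\setminus\{a_i\}$ with $c_k\succ_i c_j$; this forces $a_i\neq c_j$ (otherwise $c_j$ would be $i$'s top). I propose the deviation $b_i'=c_k$. Then $c_k$ rises to score $M$, $a_i$ drops by one (to at most $M-1$), and $c_j$ is unchanged, so $W(\vecb')=\{c_j,c_k\}$. Under $R^C$, $i$'s new utility is $\tfrac{1}{2}(u_i(c_j)+u_i(c_k))$, which exceeds $u_i(c_j)+\eps$ because $u_i(c_k)-u_i(c_j)\ge 1>2\eps$. Under $R^V$, voter $i$'s own ballot forces $c_k$ to win with probability at least $1/n$, so her new utility is at least $u_i(c_j)+(1/n)(u_i(c_k)-u_i(c_j))>u_i(c_j)+\eps$. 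Either way $\veca$ admits a profitable deviation, a contradiction.

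For \emph{sufficiency}, assume the stated condition, fix voter $i$, and consider any deviation $b'\neq a_i$ (abstention is ruled out). I split by the structure of $b'$. \emph{Case 1:} $b'=c_j$; then $c_j$ remains sole winner and $i$ loses the $\eps$ bonus. \emph{Case 2:} $\score(b',\veca)\le M-2$. If $a_i\neq c_j$, then $b'$'s new score is at most $M-1$, so $c_j$ remains sole winner and utility drops by $\eps$. If $a_i=c_j$, then $c_j$ falls to $M-1$ and the new winning set lies in $\{c_j\}\cup H(\veca)\cup\{b'\}$; since $c_j$ is $i$'s top, every other candidate in this set is strictly worse for $i$, so the lottery has expected utility strictly below $u_i(c_j)+\eps$. \emph{Case 3:} $b'=c_k\in H(\veca)\setminus\{a_i\}$, so $c_k$ rises to $M$. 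If $a_i=c_j$, then $c_j$ drops to $M-1$ and $c_k$ becomes unique winner, yielding $u_i(c_k)<u_i(c_j)$. If $a_i\neq c_j$, then $W(\vecb')=\{c_j,c_k\}$; the hypothesis gives $c_j\succ_i c_k$, and essentially the same $R^C$/$R^V$ calculations as in necessity, with the inequalities reversed, show the new expected utility is strictly below $u_i(c_j)+\eps$.

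The main obstacle is the subcase $a_i=c_j$: there, voter $i$'s truthful vote itself props up the winner, so any deviation inevitably drops $c_j$'s score and may let candidates in $H(\veca)$ catch up or overtake. The hypothesis $c_j\succ_i c$ for all $c\in H(\veca)\setminus\{a_i\}$, combined with $c_j$ being $i$'s top whenever $a_i=c_j$, is precisely what ensures every alternative winner produced by such a deviation is strictly worse for $i$, so no deviation pays off.
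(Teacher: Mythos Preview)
Your proof is correct and follows essentially the same approach as the paper's: both arguments hinge on the observation that the only deviations from $\veca$ that can change the outcome (for a voter with $a_i\neq c_j$) are votes for candidates in $H(\veca)\setminus\{a_i\}$, producing the two-way lottery $\{c_j,c_k\}$. Your version is more explicit---you separate the two directions, do a finer case split on $\score(b',\veca)$ and on whether $a_i=c_j$, and track the $\eps$ bonus and the $R^V$ probabilities carefully---whereas the paper handles both directions in a single short paragraph, simply noting that voters with $a_i=c_j$ trivially cannot improve and that for the remaining voters the theorem's condition is exactly the statement that no such lottery is preferred.
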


Finally, we consider elections that have non-truthful equilibria with 
singleton winning sets.
\begin{theorem}\label{thm:truth-rand-single2}
Let $\vecu=(u_1, \dots, u_n)$ be a utility profile over $C$,
let $R\in\{R^C, R^V\}$, and consider a ballot vector $\vecb$
with $W(\vecb)=\{c_j\}$ for some $c_j\in C$ and $b_r\neq a_r$ for some $r\in N$.
Then $\vecb$ is a PNE of the game $G = (\calT, R, \vecu)$
if and only if all of the following conditions hold:
\begin{itemize}
\item[(1)]
$b_i\in\{a_i, c_j\}$ for all $i\in N$;
\item[(2)]
$H(\vecb)\neq\emptyset$;
\item[(3)]
$c_j\succ_i c_k$ for all $i\in N$ and all $c_k\in H(\vecb)\setminus\{b_i\}$;
\item[(4)]
for every candidate $c_\ell\in H'(\vecb)$ 
and each voter $i\in N$ with $b_i=c_j$,
$i$ prefers $c_j$ to the lottery where a candidate
is chosen from $H(\vecb)\cup\{c_j, c_\ell\}$ according to $R$. 
\end{itemize}
\end{theorem}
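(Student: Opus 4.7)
The plan is to prove both directions separately: the forward direction (necessity) consists of deviation-based arguments that each condition fails only if some voter has a profitable move, while the reverse direction (sufficiency) reduces to a case analysis over all possible unilateral deviations.

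For the forward direction, suppose $\vecb$ is a PNE of $G=(\calT,R,\vecu)$ with $W(\vecb)=\{c_j\}$ and $b_r\neq a_r$ for some $r$. Condition (1) is immediate from Proposition~\ref{prop:truth-basic}, since $b_i\in W(\vecb)\cup\{a_i\}=\{c_j,a_i\}$ for every $i$. For (2), I would argue by contradiction: if $H(\vecb)=\emptyset$, then the voter $r$ (who by (1) satisfies $b_r=c_j\neq a_r$) can deviate to $a_r$; this drops $\score(c_j)$ to $M(\vecb)-1$ and raises $\score(a_r)$ by one, so the new winning set is either $\{c_j\}$ or $\{c_j,a_r\}$, and in both cases voter $r$ gains either the $\eps$ truthfulness bonus at no cost or a lottery that strictly dominates $c_j$ (since $a_r\succ_r c_j$) plus the bonus. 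For (3), if some $i$ and some $c_k\in H(\vecb)\setminus\{b_i\}$ satisfy $c_k\succ_i c_j$, then $i$'s deviation to $c_k$ either makes $c_k$ the unique winner (when $b_i=c_j$) or produces the tie $\{c_j,c_k\}$ (when $b_i=a_i\neq c_j$); in either case the integer gap $u_i(c_k)-u_i(c_j)\ge 1$ together with $\eps<\min\{1/m,1/n\}$ makes the deviation strictly profitable. For (4), consider a voter $i$ with $b_i=c_j$ deviating to some $c_\ell\in H'(\vecb)$: this drops $c_j$ to score $M(\vecb)-1$ and raises $c_\ell$ to $M(\vecb)-1$, giving the new winning set $H(\vecb)\cup\{c_j,c_\ell\}$, and the PNE property forces $i$ to prefer $c_j$ over the induced lottery.

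For the reverse direction, I assume (1)--(4) and verify that no voter $i$ has a profitable unilateral deviation $b'\in C\cup\{\bot\}$. Deviating to $\bot$ is excluded by the $-\infty$ penalty. Otherwise, I would classify $b'$ by its score class: if $b'\in H(\vecb)\setminus\{b_i\}$, the new winning set is $\{b'\}$ (when $b_i=c_j$) or $\{c_j,b'\}$ (when $b_i=a_i\neq c_j$), and condition (3) combined with the integer-utility gap rules out profit; if $b'\in H'(\vecb)$ and $b_i=c_j$, the new winning set is exactly $H(\vecb)\cup\{c_j,b'\}$, so condition (4) applies directly; if $b'\in H'(\vecb)$ and $b_i=a_i\neq c_j$, then $c_j$ remains the unique winner and $i$ only loses the $\eps$ bonus; in all remaining cases $b'$ does not change $W(\vecb)$, so again the only effect is a (possible) loss of $\eps$. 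A symmetric argument handles a truthful voter switching away from $a_i$, using (3) to rule out deviations that upgrade some $c_k\in H(\vecb)$ into the winner.

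The main subtlety will be reconciling the lottery comparisons with the $\eps$-bonus accounting. Specifically, when a non-truthful voter $i$ (with $b_i=c_j$, $a_i\ne c_j$) contemplates deviating to $c_\ell=a_i\in H'(\vecb)$, she gains the $\eps$ bonus on top of the new lottery's expected utility, and the raw comparison in condition (4) looks like it might not absorb this bonus. The resolution is to exploit that utilities are distinct positive integers: the strict preference ``$i$ prefers $c_j$ to the lottery'' forces a slack of at least $1/|H(\vecb)\cup\{c_j,c_\ell\}|\ge 1/m$, which strictly exceeds $\eps$, so the deviation remains unprofitable. The same integer-gap trick will be invoked throughout the sufficiency argument to guarantee that the four combinatorial conditions are strong enough to defeat every $\eps$-level perturbation, which is the one place the proof is not purely mechanical.
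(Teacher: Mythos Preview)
Your proposal is correct and follows essentially the same case-by-case deviation analysis as the paper's own proof. You are in fact more careful than the paper in two places: in the necessity of (2) you correctly allow for the possibility that switching $b_r$ to $a_r$ creates a two-way tie $\{c_j,a_r\}$ (the paper simply asserts $c_j$ remains the unique winner), and in the sufficiency direction you explicitly handle the $\eps$-bonus via the integer-utility gap, whereas the paper invokes conditions (3) and (4) without mentioning the bonus at all.
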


We now consider the complexity of {\sc ExistNE}, {\sc TieNE},
and {\sc SingleNE} for truth-biased voters and randomized tie-breaking.
The reader may observe that the characterization of PNE with ties
in Theorem~\ref{thm:truth-rand-single2} is essentially identical 
to the one in Theorem~\ref{thm:char-rand}. As a consequence,
we immediately obtain that $(\calT, R^C)$-{\sc TieNE}
and $(\calT, R^V)$-{\sc TieNE} are NP-hard.
For {\sc ExistNE} and {\sc SingleNE}, a simple modification
of the proof of Theorem~\ref{thm:truth-lex-hard} shows that
these problems remain hard under randomized tie-breaking.
These observations are summarized in the following corollary.

\begin{corollary}\label{cor:truth-rand-hard}
For $R\in\{R^C, R^V\}$, $(\calT, R)$-{\sc SingleNE},
$(\calT, R)$-{\sc TieNE}, and $(\calT, R)$-{\sc ExistNE}
are {\em NP}-complete.
\end{corollary}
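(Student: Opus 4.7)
The plan is to handle NP membership trivially and then treat NP-hardness of \textsc{TieNE} separately from that of \textsc{SingleNE} and \textsc{ExistNE}. Membership in NP is immediate: a ballot vector $\vecb$ is a polynomial-size witness, and checking that it is a PNE with the required winner structure reduces to computing each voter's best response against the induced lottery, which is polynomial-time.

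For $(\calT, R)$-\textsc{TieNE}, the key observation is that condition (2) of Theorem~\ref{thm:truth-rand-ties} is syntactically identical to condition (3) of Theorem~\ref{thm:char-rand}: the same set $X$, the same equal-size partition into $k\ge 2$ groups, and the same expected-utility inequality. The NP-hardness reduction of Desmedt and Elkind underlying Corollary~\ref{cor:lazy-rand-hard} produces instances with $n>m$, which automatically rules out condition (1) of Theorem~\ref{thm:truth-rand-ties} (since that condition requires each candidate to be top-ranked by at most one voter, forcing $m\ge n$). The PNE ballot vector prescribed in both characterizations is the same, namely each voter voting for her favorite candidate in $X$, so the constructed utility profile admits a PNE with winning set of size at least $2$ containing the target candidate in the truth-biased game if and only if it does in the lazy-voter game, and the reduction transfers verbatim. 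The NP-hardness of picking a prescribed candidate as a member of $X$, already established by Desmedt and Elkind, suffices for the target-candidate variant.

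For $(\calT, R)$-\textsc{SingleNE} and $(\calT, R)$-\textsc{ExistNE}, the approach is to reuse the MSI reduction from the proof of Theorem~\ref{thm:truth-lex-hard}. The utility values $u_{ij}(p_i)=1$, $u_{ij}(w_2)=\tfrac{1}{2}$, $u_{ij}(w_1)=\tfrac{1}{4}$ perturbed by $\delta=1/(6(n+k))$ were engineered precisely for this purpose. Under $R\in\{R^C,R^V\}$, Theorem~\ref{thm:truth-rand-single2} replaces the threshold set $T(\vecb)$ used in the lexicographic proof by $H(\vecb)$: a non-truthful PNE with $W(\vecb)=\{w_2\}$ requires $H(\vecb)\neq\emptyset$, truthful-or-$w_2$ voting, and no voter preferring a candidate in $H(\vecb)\setminus\{b_i\}$ to $w_2$. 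Mirroring the lexicographic analysis, Proposition~\ref{prop:truth-basic} together with the preferences of Block 1 and Block 4 voters rules out $H(\vecb)\cap(P\cup\calE)$, forcing $H(\vecb)=\{w_1\}$; the same score count then connects the non-truthful Block 1 votes to a choice of sets $A_i\in\calA_i$ with $|\bigcap_i A_i|\ge q$. Because $w_2$ remains the only possible PNE winner in this construction, \textsc{ExistNE} coincides with \textsc{SingleNE}, and a minor tweak to the priority between $w_1$ and $w_2$ (plus one additional Block 2 voter) recovers \textsc{TieNE} a second time.

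The main obstacle is the extra condition (4) of Theorem~\ref{thm:truth-rand-single2}: every voter with $b_i=w_2$ must prefer $w_2$ winning outright to the lottery over $H(\vecb)\cup\{w_2,c_\ell\}=\{w_1,w_2,c_\ell\}$ for each $c_\ell\in H'(\vecb)=\calE'$. For a deviating Block 1 voter $v_{ij}$, the choice $A_i=S^i_j$ forces $c_\ell\in S^i_j$, whence $u_{ij}(c_\ell)\le\tfrac{1}{2}-\delta$; together with $u_{ij}(w_1)=\tfrac{1}{4}$ and $u_{ij}(w_2)=\tfrac{1}{2}$ this three-way lottery yields expected utility strictly below $\tfrac{1}{2}$ under $R^C$, and an analogous argument using $v_{ij}$'s ranking, which places $w_2$ above $S^i_j$ above $w_1$, handles $R^V$ as well. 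Symmetric checks for Block 5 deviators, and for truthful voters who might try to tie $w_2$ with $w_1$, reduce to the inequality $\tfrac{1}{2}(u_i(w_2)+u_i(w_1))<u_i(w_2)$, which holds for every voter preferring $w_2$ to $w_1$; the perturbation $\delta$ is chosen small enough to absorb all higher-order corrections.
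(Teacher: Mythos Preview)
Your approach is correct and essentially matches the paper's: both transfer the Desmedt--Elkind hardness for \textsc{TieNE} via the identical tied-PNE characterizations (Theorems~\ref{thm:char-rand} and~\ref{thm:truth-rand-ties}), and both reuse the MSI reduction of Theorem~\ref{thm:truth-lex-hard} for \textsc{SingleNE} and \textsc{ExistNE}, replacing the lexicographic threshold analysis by its randomized analogue. The only cosmetic difference is that the paper phrases the key step as a direct deviation argument---a Block~1 voter who preferred some $e\in\calE'$ to $w_2$ could switch to $e$, producing the three-way tie $\{w_1,w_2,e\}$ with expected utility exceeding $\tfrac12$---whereas you route the same inequality through condition~(4) of Theorem~\ref{thm:truth-rand-single2}; these are contrapositives of one another.
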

\section{Comparison}\label{sec:comparison}
\noindent
We are finally in a position to compare the different models considered in this paper. 

\noindent{\bf Tie-breaking rules\ }
We have demonstrated that in equilibrium
the two randomized tie-breaking rules ($R^C$ and $R^V$)
induce very similar voter behavior, and identical election outcomes, both for lazy and
for truth-biased voters. This is quite remarkable, since under truthful
voting these tie-breaking rules can result in very different lotteries.
In contrast, there is a substantial
difference between the randomized rules and the lexicographic rule.
For instance, when voters are lazy, {\sc ExistNE} is NP-hard for $R^C$ and $R^V$,
but polynomial-time solvable for $R^L$. Further, the lexicographic rule
is, by definition, not anonymous, and Theorem~\ref{thm:lazy-lex-char} demonstrates
that candidates with smaller indices have a substantial advantage.
For truth-biased voters the impact of tie-breaking rules is less clear:
while we have obtained NP-hardness results for all three rules,
it appears that, in contrast with lazy voters, for truth-biased voters
randomized tie-breaking induces  ``simpler'' PNE than lexicographic tie-breaking.

\noindent{\bf Lazy vs. truth-biased voters\ }
Under lexicographic tie-breaking,
the sets of equilibria induced by the two types of secondary preferences
are incomparable: there exists a utility profile $\vecu$ such that 
the sets of candidates who can win in PNE of $(\calL, R^L, \vecu)$ 
and $(\calT,R^L, \vecu)$ are disjoint.
\begin{example}
{\em
Let $C=\{c_1,c_2, c_3\}$, and consider a $4$-voter election with
one vote of the form $c_2\succ c_3\succ c_1$, and
three votes of the form $c_3\succ c_2\succ c_1$.
The only PNE of $(\calL, R^L, \vecu)$ is $(c_2, \bot, \bot, \bot)$, 
where $c_2$ wins, whereas the only 
PNE of $(\calT, R^L, \vecu)$ is $(c_2, c_3, c_3, c_3)$, 
where $c_3$ wins.
}
\end{example}

For randomized tie-breaking, the situation is more interesting. For concreteness, 
let us focus on $R^C$. Note first that
the utility profiles for which there exist PNE with winning sets of size $2$ or more
are the same for both voter types. Further, if $(\calL, R^C, \vecu)$
has a PNE $\vecb$ with $|W(\vecb)|=1$ (which happens only if there
is a unanimous winner), then $\vecb$ is also a PNE
of $(\calT, R^C, \vecu)$. However, $(\calT, R^C, \vecu)$ may have additional
PNE, including some non-truthful ones. In particular,
for truth-biased voters,
the presence of a strong candidate is sufficient for stability: 
Proposition~\ref{prop:truth-lex} implies that
if there exists a $c\in C$ such that $\score(c, \veca)\ge \score(c',\veca)+2$ for  
all $c'\in C\setminus\{c\}$, then for any $R\in\{R^L, R^C, R^V\}$
the ballot vector $\veca$ is a PNE of $(\calT, R, \vecu)$ with $W(\veca)=\{c\}$.

\noindent{\bf Existence of PNE\ }
One can argue that, when the number of voters is large relative to the number of candidates, 
under reasonable probabilistic models of elections,   
the existence of a strong candidate (as defined in the previous paragraph)
is exceedingly likely (we omit the formal statement
of this result and its proof due to space constraints), so elections
with truth-biased voters typically admit stable outcomes; 
this is corroborated by the experimental 
results of \cite{tho-lev-ley:c:empirical}.
In contrast,
for lazy voters stability is more difficult to achieve, unless there is a candidate
that is unanimously ranked first: under randomized tie-breaking rules,
there needs to be a very precise balance among the candidates that end
up being in $W(\vecb)$, and under $R^L$ the eventual winner
has to Pareto-dominate all candidates that lexicographically precede him.  
Either of these conditions appears to be quite difficult to satisfy in a large election.

\noindent{\bf Quality of PNE\ }
In all of our models,
a candidate ranked last by all voters cannot be elected, in contrast to the basic game-theoretic model for Plurality voting. 
However, not all non-desirable outcomes are eliminated: 
under $R^V$ and $R^C$ both lazy voters and truth-biased voters can still elect
a Pareto-dominated candidate with non-zero probability in PNE. 
This has been shown for lazy voters and $R^C$ by~\cite{des-elk:c:eq} (Example 1),
and the same example works for truth-biased voters and for $R^V$.
A similar construction shows that a Pareto-dominated candidate
may win under $R^L$ when voters are truth-biased. 
%
In contrast, lazy voters cannot elect a Pareto-dominated candidate under $R^L$:
Theorem~\ref{thm:lazy-lex-char} shows that the winner has to be ranked
first by some voter. 

We can also measure the quality of PNE 
by analyzing the Price of Anarchy (PoA) in both models.
The study of PoA in the context of 
voting has been recently initiated by Branzei et al.~\cite{bcmp_2013}. 
The additive version of PoA, which was considered by Branzei et al., 
is defined as the worst-case difference between the score of the winner under truthful
voting and the truthful score of a PNE winner. 
It turns out that PoA can be quite high, both for lazy and truth-biased voters.
To illustrate this, we provide in the supplementary material two examples 
showing that under lexicographic tie-breaking $\PoA = \Omega (n)$ in both models. 
Similar results can be established for randomized tie-breaking as well.

Even though the $\PoA$ results are not encouraging, 
this is only a worst-case analysis and we expect PNE to have a better performance on average.
For the truth-biased model, this is also supported by the experimental evaluation of 
Thompson et al.~\cite{tho-lev-ley:c:empirical}, 
who showed that in the truth-biased model 
most PNE identified in their simulations had good social welfare properties. 
Formalizing this observation, i.e., providing
average-case analysis of the quality of PNE in voting games, 
is a promising topic for future work.


\section{Extension: Principled Voters}\label{sec:principled}
\noindent
The results of this paper can be extended to the setting where some of the voters
are {\em principled}, i.e., always vote truthfully (and never abstain). 
%
Due to space constraints, we 
relegate the formal statements of our results for this extended model to the supplementary material.
Briefly, the presence of principled voters has the strongest effect 
on lazy voters and lexicographic tie-breaking, whereas for other settings the effect
is less pronounced. All computational problems that were easy in the standard model
remain easy in the extended model (and, obviously, all hard problems remain hard).
Finally, in the presence of principled voters the random candidate tie-breaking
rule is no longer equivalent to the random voter tie-breaking rule.

\section{Conclusions and Future Work}
\noindent
We have characterized PNE of Plurality voting for several combinations of secondary 
preferences and tie-breaking rules. Our complexity results are summarized 
in Table~\ref{tbl:summary}. A promising direction for future work 
is to investigate more general classes of tie-breaking rules.
It is also interesting to consider the complexity of
various refinements of Nash equilibria for our models,
such as strong Nash equilibria (for which an analysis for $\calT$ and $R^L$ 
can be found in the work of Obraztsova et al.~\cite{obr-mar-tho:c:truth-biased}), 
or subgame-perfect Nash equilibria for settings where
voters submit their ballots one by one; 
see \cite{des-elk:c:eq} and \cite{xia-con:c:spne} for some results about
such equilibria.

\begin{table}[ht]
{\small
\begin{tabular}{|r|c|c|c|}
\hline
 & {\sc SingleNE}    & {\sc TieNE}	& {\sc ExistNE}	\\
\hline
  $(\calL, R^L)$ & P (Cor.~\ref{cor:lazy-lex-easy})            	& P (Cor.~\ref{cor:lazy-lex-easy})		& P (Cor.~\ref{cor:lazy-lex-easy})		\\
\hline
  $(\calL, R^C)$ & P (Cor.~\ref{cor:lazy-rand-hard})            & NPc (Cor.~\ref{cor:lazy-rand-hard})          & NPc (Cor.~\ref{cor:lazy-rand-hard})          \\
\hline
  $(\calL, R^V)$ & P (Cor.~\ref{cor:lazy-rand-hard})            & NPc (Cor.~\ref{cor:lazy-rand-hard}) & NPc (Cor.~\ref{cor:lazy-rand-hard}) \\
\hline
  $(\calT, R^L)$ & NPc (Thm.~\ref{thm:truth-lex-hard})                 & NPc (Thm.~\ref{thm:truth-lex-hard})           & NPc (Thm.~\ref{thm:truth-lex-hard})          \\
\hline
  $(\calT, R^C)$ & NPc (Cor.~\ref{cor:truth-rand-hard})                 & NPc (Cor.~\ref{cor:truth-rand-hard})          & NPc (Cor.~\ref{cor:truth-rand-hard})          \\
\hline
  $(\calT, R^V)$ & NPc (Cor.~\ref{cor:truth-rand-hard})                 & NPc (Cor.~\ref{cor:truth-rand-hard})          & NPc (Cor.~\ref{cor:truth-rand-hard})           \\
\hline
\end{tabular}
}
\caption{\label{tbl:summary}Complexity results: P stands for ``polynomial-time solvable'', 
NPc stands for ``NP-complete''.}
\end{table}


\newpage


\newpage

\appendix
\section{Proofs Omitted from Section 3}

\smallskip

\noindent{\bf Theorem \ref{thm:lazy-lex-char}.\ }
{\em 
For any utility profile $\vecu$ the game $G = (\calL, R^L, \vecu)$
has the following properties:
\begin{enumerate}
\item
If $\vecb$ is a PNE of $G$ then $|W(\vecb)|\in\{1,m\}$.
Moreover, $|W(\vecb)|=m$ if and only if $\vecb$ is the trivial ballot 
and all voters rank $c_1$ first. 
\item
If $\vecb$ is a PNE of $G$ then
there exists at most one voter $i$ with $b_i\neq \bot$.
\item
$G$ admits a PNE if and only if all voters 
rank $c_1$ first (in which case $c_1$ is the unique PNE winner)
or there exists a candidate $c_j$ with $j>1$ such that (i) $\score(c_j,\veca)>0$
and (ii) for every $k<j$ it holds that all voters prefer $c_j$ to $c_k$.
If such a candidate exists, he is unique, and wins in all PNE of $G$.
\end{enumerate}
}

\begin{proof}
Fix a utility profile $\vecu$ and a ballot $\vecb$
such that $\vecb$ is a PNE of $G = (\calL, R^L, \vecu)$.

To prove the first claim,
suppose first that  $1<|W(\vecb)|$ and $\vecb$ is not trivial. Then 
there are two candidates $c_j, c_k\in W(\vecb)$, $j<k$, such that
$\score(c_j, \vecb)>0$ and $\score(c_k,\vecb)>0$.  
Hence, there exists at least one voter who votes for $c_k$. However, 
the election outcome will not change if this voter abstains, 
a contradiction with $\vecb$ being a PNE of $G$.  
Now, suppose that $\vecb$ is trivial. In this case $W(\vecb)=C$
and $c_1$ wins. If any voter prefers some other candidate $c$ to $c_1$,
she can improve her utility by voting for $c$, as this will change the election
outcome to $c$. On the other hand, if all voters rank $c_1$ first, the trivial ballot
is clearly a PNE. 

The second claim follows from our first claim 
and Proposition~\ref{prop:lazy-basic}.

To prove the third claim, suppose that there exists a candidate $c_j$, $j > 1$, satisfying conditions (i) and (ii).
Consider a ballot vector $\vecb$ where $b_i=c_j$ for some voter $i$ with $a_i=c_j$
(the existence of such voter is guaranteed by condition (i))
and $b_{i'}=\bot$ for all $i'\in N\setminus\{i\}$. Voter $i$ cannot benefit
from voting for another candidate or abstaining, as this will change 
the election outcome to one she likes less than the current outcome.
Any other voter can only change the election outcome if she votes
for a candidate $c_k$ with $k<j$. But then condition (ii) implies
that no voter wants the election outcome to change in this way.
Conversely, suppose that $\vecb$ is a PNE. We have argued that
either $\vecb$ is trivial or 
$b_i=c_j$ for some $i\in N$ and some $c_j\in C$ and $b_{i'}=\bot$ 
for all $i'\in N\setminus\{i\}$. 
In the latter case, if $c_j\neq a_i$, voter $i$ can improve her utility
by voting for $a_i$. Moreover, if $j=1$, voter $i$
can improve her utility by abstaining, as $c_1$ would remain
the election winner in this case. Finally, if there exists 
a candidate $c_k$ with $k<j$ such that some voter $i'$ prefers
$c_k$ to $c_j$, then $i'$ can change the election outcome to $c_k$
by voting for $c_k$. 

It remains to show that conditions (i) and (ii) can be satisfied by at most one candidate.
To see this, note that if both $c_j$ and $c_k$ satisfy condition (i) and 
$j<k$, then $c_k$ violates condition (ii), as the voter who ranks $c_j$
first clearly prefers $c_j$ to $c_k$.  
\end{proof}

\smallskip

\noindent{\bf Theorem \ref{thm:char-rand}.\ }
{\em
Let $\vecu=(u_1, \dots, u_n)$ be a utility profile over $C$, $|C|=m$,
and let $R\in\{R^C, R^V\}$.
The game $G = (\calL, R, \vecu)$ admits a PNE
if and only if one of the following conditions holds:
\begin{itemize}
\item[(1)]
all voters rank some candidate $c_j$ first;
\item[(2)]
each candidate is ranked first by at most one voter, and, moreover,
$\frac{1}{n}\sum_{i\in N}u_\ell(a_i)\ge \max_{i\in N\setminus\{\ell\}}u_\ell(a_i)$
for each $\ell\in N$.
\item[(3)]
there exists a set of candidates
$X = \{c_{\ell_1}, \dots, c_{\ell_k}\}$ with $2\le k \le \min(n/2,m)$
and a partition of the voters
into $k$ groups $N_1, \dots, N_k$ of size ${n}/{k}$ each such that for each
$j\in[k]$ and each $i\in N_j$ we have $c_{\ell_j}\succ_i c$
for all $c\in X\setminus\{c_{\ell_j}\}$, and, moreover, 
$\frac{1}{k}\sum_{c\in X}u_i(c)\ge \max_{c\in X\setminus\{c_{\ell_j}\}}u_i(c)$.
\end{itemize}
Further, if condition (1) holds for some $c_j\in C$,
then if $R=R^C$ then for each $i\in N$ the game $G$ has a PNE
where $i$ votes for $c_j$ and all other voters abstain, 
whereas if $R=R^V$ the game $G$ has a PNE where all voters abstain;
if condition (2) holds, then $G$ has a PNE
where each voter votes for her top candidate;
and
if condition (3) holds for some set $X$, then $G$ has a PNE
where each voter votes for her favorite candidate in $X$.
The game $G$ has no other PNE.
}

\begin{proof}
It is easy to see that any of the conditions (1)--(3) is sufficient for the existence
of PNE, with ballot vectors described in the statement of the theorem
witnessing this. We will now show that satisfying at least one of these conditions is necessary
for the existence of a PNE, and that no other ballot vector is a PNE.
Fix a tie-breaking rule $R\in\{R^C, R^V\}$, a utility profile $\vecu$, 
and suppose that a ballot vector $\vecb$ is a PNE of $(\calL, R,\vecu)$.
We will argue that $\vecu$ satisfies one of the conditions (1)--(3).

Suppose first that $W(\vecb)=\{c_j\}$ for some $c_j\in C$.
By Proposition~\ref{prop:lazy-basic} there exists a voter $i\in N$
with $b_i=c_j$, and $b_{i'}=\bot$ for all $i'\in N\setminus\{i\}$.
It has to be the case that $a_i=c_j$: otherwise voter $i$ 
can make $a_i$ the unique winner by changing her vote to $a_i$, 
thus increasing her utility. Now, suppose that $a_{i'}\neq c_j$
for some $i'\in N\setminus\{i\}$. If voter $i'$ changes her ballot
to $c_\ell = a_{i'}$, the new winning set is
$\{c_j, c_\ell\}$. Now, if $R=R^C$, 
the overall utility of $i'$ is given by $\frac12(u_{i'}(c_\ell)+u_{i'}(c_j))$,
and if $R=R^V$, 
the overall utility of $i'$ is given by $\lambda u_{i'}(c_\ell)+(1-\lambda)u_{i'}(c_j)$,
where $\lambda\ge \frac{1}{n}$ (this is because voter $i'$ herself
ranks $c_\ell$ above $c_j$). In both cases, $i'$ can increase her
utility by voting $c_\ell$, a contradiction. Hence, it has
to be the case that all voters rank $c_j$ first, i.e., condition~(1) 
is satisfied.

Now, suppose that $|W(\vecb)|> 1$. We will argue that in this case
either all voters abstain or no voter abstains. 
Indeed, suppose that $b_i=\bot$, $b_{\ell}\neq \bot$ for some $i,\ell\in N$,
i.e., each candidate in $W(\vecb)$ receives at least one vote.
If, instead of abstaining, $i$ votes for her most preferred
candidate in $W(\vecb)$, this candidate becomes the unique election winner.
In contrast, under $\vecb$
$i$'s least preferred candidate in $W(\vecb)$ wins with positive probability:
this is immediate for $R=R^C$, and for $R=R^V$ this holds because
for every $c_j\in W(\vecb)$ there exists a voter $i'$ with $b_{i'}=c_j$,
and $c_j$ wins whenever ties are broken according to the preferences of voter ${i'}$.  
Thus, $i$ can improve her utility by changing her vote, a contradiction.
Hence, if $|W(\vecb)|=k$ and $\vecb$ is not trivial, 
each candidate in $W(\vecb)$ receives exactly $n/k$ votes.

In particular, if $|W(\vecb)|=n$ and $\vecb$ is not trivial, 
each candidate in $W(\vecb)$ receives exactly one
vote. We will argue that in this case condition (2) is satisfied. 
We will first prove that $b_i=a_i$ for all $i\in N$. Indeed, suppose that $b_i\neq a_i$
for some $i\in N$, and consider the ballot vector $\vecb'=(\vecb_{-i}, a_i)$. 
If $a_i\in W(\vecb)$, then $W(\vecb')=\{a_i\}$,
whereas under $\vecb$ voter $i$'s least preferred candidate in 
$W(\vecb)$ wins with positive probability.
If $a_i\not\in W(\vecb)$, we have $W(\vecb')=(W(\vecb)\setminus\{b_i\})\cup\{a_i\}$, 
so $U_i(\vecb')=U_i(\vecb)+\frac{1}{n}(u_i(a_i)-u_i(b_i))>U_i(\vecb)$.
In both cases $i$ can increase her overall utility by voting for $a_i$,
a contradiction. Hence, we have $W(\vecb)=\{a_i\mid i\in N\}$.
Thus, under both $R^C$ and $R^V$ the outcome of this election
is a lottery that assigns equal probability to all candidates in $W(\vecb)$.
Now, if any voter prefers her second most preferred candidate in
$W(\vecb)$ to this lottery, she can vote for that candidate, making him
the unique election winner, a contradiction with $\vecb$ being a PNE.
Thus, in this case condition (2) is satisfied. 

Now, suppose that $\vecb$ is not trivial and $|W(\vecb)|=k<n$. We have argued that each candidate
in $W(\vecb)$ receives exactly $n/k$ votes. This means that $k$ divides $n$, 
so in particular $k\le n/2$ and each candidate in $W(\vecb)$ receives
at least two votes. Under both of our tie-breaking rules, each candidate
in $W(\vecb)$ wins with probability $1/k$. Consider a voter $i$.
She can make any candidate in $W(\vecb)\setminus\{b_i\}$ the unique
election winner by voting for him. Since $\vecb$ is a PNE, no voter
wants to change the election outcome in this way; this implies, in particular,
that each voter votes for her favorite candidate in $W(\vecb)$.
Thus, in this case condition (3) is satisfied with $X=W(\vecb)$;
the voters are partitioned into groups according to their votes in $\vecb$.

It remains to consider the case where $\vecb$ is the trivial ballot vector.
When $R=R^C$, $\vecb$ cannot be a PNE: under $\vecb$ the outcome is a uniform lottery over $C$,
and every voter would rather vote for her favorite candidate in order to make him the unique
winner. When $R=R^V$, the outcome is a lottery that assigns a positive probability
to each candidate in $A=\{a_i\mid i\in N\}$.
If $|A|>1$, $\vecb$ is not a PNE: each voter would prefer to vote 
for her favorite candidate in order to make him the unique winner.   
However, if $A$ is a singleton, i.e., all voters rank some candidate $c_j$
first, the trivial ballot vector is a PNE: after all voters abstain, $R^V$
picks a random voter, and this voter selects $c_j$.
\end{proof}

\smallskip

\section{Proofs Omitted from Section 4}
The following problem is very useful in our constructions.

\begin{definition}
An instance of {\sc Maximum $k$-Subset Intersection (MSI)} is
a tuple $(\calE, A_1, \dots, A_m, k, q)$, where $\calE = \{e_1,\dots,e_n\}$ is a finite set of elements,
each $A_i$, $i\in[m]$, is a subset of $\calE$, and $k, q$ are positive integers.
It is a ``yes''-instance if there exist sets $A_{i_1}, \dots, A_{i_k}$
such that  $|\cap_{j\in [k]}A_{i_j}|\ge q$,
and a ``no''-instance otherwise.
\end{definition}

Despite the relevance of MSI to various optimization scenarios, 
see e.g.~\cite{xav:j:subset}, it was only recently shown that this 
problem is hard under a Cook reduction. We provide below a Karp reduction, 
establishing NP-completeness of MSI.

\begin{theorem}
{\sc MSI} is {\em NP}-complete.
\end{theorem}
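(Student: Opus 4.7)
The plan is to reduce from the \emph{Balanced Complete Bipartite Subgraph} problem (BCBS), which is listed as NP-complete (problem GT24 in Garey and Johnson~\cite{gar-joh:b:NP}). In BCBS, we are given a bipartite graph $G=(U\cup V,E)$ and a positive integer $k$, and we must decide whether there exist $U'\subseteq U$ and $V'\subseteq V$ with $|U'|=|V'|=k$ such that $U'\times V'\subseteq E$. Membership of MSI in NP is immediate: a certificate consists of the $k$ indices $i_1,\dots,i_k$, and one can compute $|A_{i_1}\cap\cdots\cap A_{i_k}|$ and compare it with $q$ in polynomial time.

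Given an instance $(G,k)$ of BCBS with $G=(U\cup V,E)$, I would construct the MSI instance $(\calE, A_1,\dots,A_{|U|}, k, q)$ as follows. Let $\calE=V$; index the subsets by the vertices of $U$, and for each $u\in U$ set
\[
A_u \;=\; \{\,v\in V : \{u,v\}\in E\,\}.
\]
Finally, take the number of sets to be selected equal to $k$ and set $q=k$. This construction is clearly polynomial in the size of $G$.

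For correctness, first suppose $(G,k)$ is a ``yes''-instance of BCBS, witnessed by $U'=\{u_1,\dots,u_k\}$ and $V'\subseteq V$ of size $k$ with $U'\times V'\subseteq E$. Then every $v\in V'$ is a neighbor of every $u\in U'$, so $V'\subseteq A_{u_1}\cap\cdots\cap A_{u_k}$, giving an intersection of size at least $k=q$. Conversely, suppose the MSI instance admits sets $A_{u_1},\dots,A_{u_k}$ whose intersection contains some $V'\subseteq V$ with $|V'|\ge k$. Then each $u_j$ is adjacent to every $v\in V'$, so $U':=\{u_1,\dots,u_k\}$ and any $k$-element subset of $V'$ form a balanced complete bipartite subgraph in $G$. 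Here I should note that the $u_j$ must be distinct, which holds because MSI asks for $k$ distinct indices; a minor caveat is whether MSI allows repetition of indices, but selecting distinct $A_i$ only makes the instance harder (intersections of repeated sets equal that single set), so the natural formulation of MSI with distinct indices matches BCBS directly.

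The main obstacle is simply identifying a suitable NP-complete source problem. Once BCBS is chosen, the reduction itself is almost trivial because the structure of MSI (intersection of chosen sets) mirrors the definition of a biclique (common neighborhood of one side). The subtleties that remain are administrative: confirming that the definition of MSI enforces distinct indices (or, equivalently, that repeating indices is never strictly better), and that the BCBS NP-completeness reference applies to the decision version with equal-size parts, both of which are standard.
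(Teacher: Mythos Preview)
Your proposal is correct and is essentially the same reduction as the paper's: both reduce from \textsc{Balanced Complete Bipartite Subgraph}, taking one side of the bipartition as the ground set $\calE$ and the other side as the indexing set for the $A_i$'s (as neighborhoods), with $q=k$. The only cosmetic difference is which side plays which role, and your extra remark about distinct indices is a harmless clarification.
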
 

\begin{proof}
Trivially MSI is in NP. For hardness, we provide a reduction from the {\sc Balanced Complete Bipartite Subgraph} problem.
An instance of this problem consists of a bipartite graph $G = (V_1, V_2, E)$, and an integer $k$. It is a "yes"-instance if there exist sets $S_1\subseteq V_1$, $S_2\subseteq V_2$, with $|S_1| = |S_2| = k$, such that the subgraph induced by $S_1$ and $S_2$ is a complete bipartite subgraph. This problem is known to be NP-complete, see~\cite{gar-joh:b:NP}.

Consider an instance $I$ of this problem, given by $G$ and some integer $k$. We construct an instance $I'$ of MSI as follows: we let $\calE = V_1$. We also let the elements of $V_2$ correspond to sets. In particular, for every $j\in V_2$, we have a corresponding set $A_j\subseteq \calE$, such that $A_j = \{i\in V_1: (i, j)\in E\}$. We set the parameter $k$ in MSI to be the same as the integer $k$ from $I$. We also set $q=k$. We now claim that $I$ is a "yes"-instance of {\sc Balanced Complete Bipartite Subgraph} if and only if $I'$ is a "yes"-instance of MSI. 

To see this, suppose there exist $S_1\subseteq V_1$, $S_2\subseteq V_2$, such that we have a complete bipartite subgraph induced by $S_1$ and $S_2$. Then take the $k$ sets corresponding to $S_2$. Clearly every element from $S_1$ belongs to all these sets, hence the intersection of these sets is at least $k$. For the reverse direction, suppose there exists a collection of $k$ sets in $I'$ whose intersection is at least $k$. Then take as $S_2$ the $k$ vertices that correspond to these sets. Let also $S_1$ be any $k$-element subset of the intersection. Then obviously, the bipartite graph induced by $S_1$ and $S_2$ is complete.
\end{proof}

We can now prove Theorem~\ref{thm:truth-lex-hard}, utilizing the hardness of MSI.

\smallskip

\noindent {\bf Theorem \ref{thm:truth-lex-hard}.\ }
{\em
$(\calT, R^L)$-{\sc SingleNE},
$(\calT, R^L)$-{\sc ExistNE}, and $(\calT, R^L)$-{\sc TieNE}
are {\em NP}-complete.
}

\begin{proof}
We will first establish the NP-completeness of $(\calT, R^L)$-{\sc SingleNE}, 
and then show how to modify the proof for the other two problems.
It is trivial to show that $(\calT, R^L)$-{\sc SingleNE} is in NP, 
so we focus on showing that it is NP-hard.

We provide a reduction from {\sc MSI}. 
Consider an instance $I$ of the {\sc MSI} problem. 
We can assume that 
for every $e \in \calE$ there exists an index $i$, such that $e \notin A^i$ and $m>n+k+q$ 
(if this is not the case we can add several additional empty sets).

We now construct an instance of our problem. We have $n+3$ candidates, namely,
$C=\calE \cup \{w_1, w_2, w_3\}$, with ties broken according to $e_1>\ldots e_n>w_3 > w_1 > w_2$.
We set $w_2$ to be the target winning candidate, i.e., $c_p:= w_2$. 
Finally, we set $\delta = \frac{1}{6(n+m)}$. 
We will now describe  the voters' preferences and their utility functions
(while the utility functions play no role in this proof, we will use the same construction
in the NP-hardness proof for randomized tie-breaking, see Corollary~\ref{cor:truth-rand-hard}, where they do matter). 
The voters in our instance are split into five blocks as follows.

\begin{itemize}
\item {\bf Block 1:} For every $i\in [m]$ we construct a  
voter $v_{i}$ who ranks the candidates as
$w_3 \pref \calE \setminus A^i \pref w_2 \pref A^i \pref w_1$.
We let $u_{i}$ denote the utility function of $v_{i}$, which we construct as follows. 
We set $u_{i}(w_3) = 1, u_{i}(w_2) = \frac{1}{2}, u_{i}(w_1) = \frac{1}{4}$. 
Further, $v_{i}$ assigns utility of $1-j\delta$ to her $j$-th most preferred candidate in $\calE \setminus A^i$. 
Note that $|\calE \setminus A^i|<n$, so 
these numbers are strictly between $1$ and $1/2$ and they are consistent with the ranking of voter $v_{i}$. 
Finally, $v_{i}$ assigns utility of $1/2 - j\delta$ to her $j$-th most preferred candidate in $A^i$;
these numbers are strictly between $1/2$ and $1/4$. 

\item {\bf Block 2:} We set $s = m-k+3$, 
and we add $s-1$ voters whose preferences are of the form $w_1 \pref w_2 \pref w_3 \pref \calE$.

\item {\bf Block 3:} We add $s-k-(n-q)-1$ voters with preferences of the form
$w_2 \pref w_1 \pref w_3 \pref \calE$.  

\item {\bf Block 4:} For every $e_j \in \calE$, 
we add $s-2$ voters with preferences of the form $e_j \pref w_3 \pref w_2 \pref w_1 
\pref \calE \setminus \{e_j\}$.

\item {\bf Block 5:} A voter with preferences of the form
$ w_3 \pref w_2 \pref \calE \pref w_1$.
\end{itemize}
Each of the voters in blocks 2--5 assigns utility of $1-\delta(j-1)$
to the $j$-th candidate in her ranking.

Let $I\rq{}$ be the constructed instance. We want to establish that $I$ is a ``yes''-instance of the  {\sc MSI} 
problem if and only if $I\rq{}$ is a ``yes''-instance of $(\calT, R^L)$-{\sc SingleNE}.
Suppose first that $I\rq{}$ is a ``yes''-instance of our problem. Then there exists a PNE $\vecb$ with 
$W(\vecb)=\{w_2\}$. We will first establish some properties of $\vecb$.

Let $\veca$ denote the truthful ballot for $I\rq{}$.
We have $\score(w_1,\veca)=s-1$, $\score(w_2,\veca)=s-k-(n-q)-1$, $\score(w_3,\veca) = m+1 = s+k-2$, 
and $\score(e_j,\veca)=s-2$ for every $e_j \in \calE$. 
It follows that $w_2$ is not among the winners in $\veca$. 

We will now argue that in the PNE $\vecb$, $T(\vecb)=\{w_1\}$. 
We know by Lemma~\ref{lem:threshold} that $T(\vecb)\neq\emptyset$.
Since $w_2$ is the winner at $\vecb$, $w_2\not\in T(\vecb)$.
Also, it is easy to see that $w_3 \notin T(\vecb)$. 
Indeed, suppose the contrary.
All voters in Block 4 prefer $w_3$ to $w_2$. 
By Proposition \ref{prop:truth-basic}, 
in $\vecb$ these voters vote either for their top choice or for $W(\vecb) = \{w_2\}$, hence not for $w_3$. 
But if $w_3\in T(\vecb)$, each of these voters would prefer to switch to voting $w_3$, 
a contradiction with $\vecb$ being a PNE.
A similar argument shows that $\calE\cap T(\vecb)=\emptyset$.
Indeed, we assumed that for every $e_{\ell} \in \calE$ there exists $i$ such that $e_{\ell} \notin 
A^i$. Then voter $v_{i}$ from Block 1 prefers 
$e_{\ell}$ to $w_2$, and $e_\ell$ is not her top choice.
By Proposition \ref{prop:truth-basic},
in $\vecb$ voter $v_{i}$ votes for her top choice or for $w_2$, but if 
$e_\ell\in T(\vecb)$, she would prefer to change her vote to $e_\ell$,
a contradiction with $\vecb$ being a PNE.
As we have ruled out all candidates except for $w_1$, it follows that $T(\vecb)=\{w_1\}$,
and hence, by Lemma \ref{lem:threshold}, $\score(w_1, \vecb)=s-1$. 
Then, by the tie-breaking rule, it must be that $\score(w_2, \vecb)=s$. 
Thus, in $\vecb$ candidate $w_2$ receives exactly $k + n-q+1$ non-truthful votes, 
in addition to the votes of his own supporters. 
We also know that the voters from Block 3 keep voting for $w_2$ in $\vecb$, 
and, by Lemma \ref{lem:threshold}, the voters from Block 2 keep voting for $w_1$ in $\vecb$. 
Hence $w_2$ receives the extra $k + n-q+1$ votes in $\vecb$ from Blocks 1, 4 and 5.

We claim that $\score(w_3,\vecb)\leq s-3$.
Indeed, we have $\score(c',\vecb)\le s-2$ for all $c'\in \calE \cup\{w_3\}$ since $T(\vecb)=\{w_1\}$.
Further, if $\score(w_3,\vecb)= s-2$, 
then $\vecb$ would not be a PNE, as some voters from Blocks 1, 4, and 5 
vote for $w_2$, but all of them prefer $w_3$ to $w_2$.
Thus, since the only supporters of $w_3$ are in Block 1 and Block 5, in total, we must have at least $k+1$ voters from Blocks 1 and 5 who vote for $w_2$ in $\vecb$. 
This means that there are at least $k$ 
voters from Block 1, who have deviated to $w_2$. Now,
we pick all sets $A_{i_j}$  for all $j\in [k]$ such that $v_{i_j}$ votes for $w_2$ in $\vecb$.
We will argue that $|\cap_{j\in [k]} A_{i_j}|\ge q$.

To see this, let $\calE'=\{e\in\calE\mid\score(e, \vecb)=s-2\}$.
Note that in $\vecb$ there are at most $n-q$ voters in Block 4 who vote for $w_2$. 
Hence, we have $|\calE'|\ge q$. To complete the proof, we only need to argue that for each $e\in\calE'$ 
we have $e\in A_{i_j}$ for all $j\in [k]$. Indeed, fix some $e\in\calE'$ and some $j\in [k]$.
By our choice of $A_{i_j}$, the corresponding voter $v_{i_j}$ has voted for $w_2$ in $\vecb$. 
Suppose that $v_{i_j}$ prefers $e$ to $w_2$. 
If she changes her vote to $e$, then $e$ becomes the new winner, due to tie-breaking, 
a contradiction with $\vecb$ being a PNE. Thus, it has to be the case that $v_{i_j}$ prefers $w_2$ to $e$, which means that $e\in A_{i_j}$,
as we wanted to prove.
Hence, a ``yes''-instance for $(\calT, R^L)$-{\sc SingleNE}, 
corresponds to a "yes"-instance of {\sc MSI}. 

For the converse direction, suppose that for the instance $I$, there is a collection of sets $A_{i_1}, \dots, A_{i_k}$
with $|\cap_{j\in [k]} A_{i_j}|\ge q$. Let $\calE'=\cap_{j\in [k]} A_{i_j}$.
We identify below a set of voters, $N'$, from the instance $I'$, which allows us to construct an equilibrium profile. 
We include in $N'$ the set $\{v_{i_j}\mid j\in[k]\}$. We also add to $N'$
the voter from Block 5.
Furthermore, for each $e\not\in \calE'$, we add to $N'$ one voter from Block 4 who ranks $e$ first.
Observe that at this point we have $|N'|\le k+(n-q)+1$. If $|N'|<k+(n-q)+1$, 
we pick $n-q+k+1-|N'|$ additional voters from Block 4, corresponding to supporters of elements $e\not\in \calE'$, and add them to $N'$.
Now, consider a ballot vector $\vecb$ where the voters in $N'$ vote in favor of $w_2$, 
and everyone else votes truthfully.
We have $\score(w_2, \vecb)=s$, $\score(w_1, \vecb)=s-1$, 
$\score(w_3, \vecb)=s-3$, $\score(e, \vecb)\le s-3$
for all $e\in \calE\setminus\calE'$, $\score(e, \vecb)= s-2$ for all $e\in\calE'$.
Hence, $w_2$ is the winner, and all non-truthful voters rank $w_2$ above $w_1$ as well 
as above all candidates in $\calE'$ (who could possibly become winners if some voters had an incentive to vote for them).
Thus, $\vecb$ is a PNE.

Finally, we comment on the hardness of the problems $(\calT, R^L)$-{\sc ExistNE} and $(\calT, R^L)$-{\sc TieNE}. 
For $(\calT, R^L)$-{\sc TieNE}, we can make a small modification to the reduction above. Specifically, it 
suffices to switch the tie-breaking order between $w_1$ and $w_2$, and also add one more voter 
to Block 2 in favor of $w_1$. 
For $(\calT, R^L)$-{\sc ExistNE}, hardness is again based on a modification of the reduction above: we can add $k-1$ additional copies of $w_3$ into the profile and change $s$ to $m+2$.
In this case it can be shown that 
$w_1$ is the only possible threshold candidate and hence only $w_2$ can be a winner in a PNE.
We omit further details from this version.
\end{proof}

\smallskip

\noindent{\bf Theorem \ref{thm:truth-rand-ties}.\ }
{\em
Let $\vecu=(u_1, \dots, u_n)$ be a utility profile over $C$, $|C|=m$,
and let $R\in\{R^C, R^V\}$.
The game $G = (\calT, R, \vecu)$ admits a PNE with a winning set of size at least $2$
if and only if one of the following conditions holds:
\begin{itemize}
\item[(1)]
each candidate is ranked first by at most one voter, and, moreover,
$\frac{1}{n}\sum_{i\in N}u_\ell(a_i)\ge \max_{i\in N\setminus\{\ell\}}u_\ell(a_i)$
for each $\ell\in N$.
\item[(2)]
there exists a set of candidates
$X = \{c_{\ell_1}, \dots, c_{\ell_k}\}$ with $2\le k \le \min(n/2,m)$
and a partitioning of the voters
into $k$ groups $N_1, \dots, N_k$, of size ${n}/{k}$ each, such that for each
$j\in[k]$ 
and each $i\in N_j$, we have $c_{\ell_j}\succ_i c$
for all $c\in X\setminus\{c_{\ell_j}\}$, and, moreover,
$\frac{1}{k}\sum_{c\in X}u_i(c)\ge \max_{c\in X\setminus\{c_{\ell_j}\}}u_i(c)$.
\end{itemize}
Further, 
if condition (1) holds, then $G$ has a PNE
where each voter votes for her top candidate,
and
if condition (2) holds for some $X$, then $G$ has a PNE
where each voter votes for her favorite candidate in $X$.
The game $G$ has no other PNE.
}
\begin{proof}
It is clear that if one of the conditions (1)--(2) is satisfied then the game 
admits a PNE of the form described in the statement of the theorem.
For the converse direction, fix a tie-breaking rule $R\in\{R^C, R^V\}$ and a utility profile $\vecu$,
and suppose that a ballot vector $\vecb$ is a PNE of $(\calT, R,\vecu)$ with $|W(\vecb)|\ge 2$.
We will argue that $\vecu$ satisfies one of the conditions (1)--(2).

If $|W(\vecb)|=n$, each candidate in $W(\vecb)$ receives exactly one vote.
As argued in the proof of Theorem~\ref{thm:char-rand}, this means that
each voter votes for her favorite candidate, and prefers the uniform lottery
over $A=\{a_i\mid i\in N\}$ to her second most preferred candidate in $A$
being the unique winner, i.e., condition (1) holds.

Now, suppose that $|W(\vecb)|<n$. We claim that $b_i\in W(\vecb)$
for all $i\in N$. Indeed, suppose that $b_i\not\in W(\vecb)$ for some $i\in N$. 
Let $c_j$ be voter $i$'s most preferred candidate in $W(\vecb)$.
If $i$ changes her vote to $c_j$, $c_j$ becomes the unique winner, 
whereas when she votes $b_i$, the outcome is a lottery
over $W(\vecb)$ where candidates other than $c_j$ have a positive chance of winning.
Thus, $i$ can profitably deviate, a contradiction. Thus, there exists
a $k\ge 2$ such that each candidate in $W(\vecb)$ receives $n/k$ votes.
An argument similar to the one in the proof of Theorem~\ref{thm:char-rand}
shows that condition (2) must be satisfied.
\end{proof}

\smallskip

\noindent{\bf Theorem \ref{thm:truth-rand-single1}.\ }
{\em
Let $\vecu=(u_1, \dots, u_n)$ be a utility profile over $C$,
let $R\in\{R^C, R^V\}$, and suppose that $W(\veca)=\{c_j\}$ for some $c_j\in C$.
Then $\veca$ is a PNE of the game $G = (\calT, R, \vecu)$
if and only if for every $i\in N$ and every $c_k\in H(\veca)\setminus\{a_i\}$
it holds that $c_j\succ_i c_k$.
}
\begin{proof}
Consider the ballot vector $\veca$ and a voter $i\in N$.
Clearly, if $a_i=c_j$, voter $i$ cannot improve her utility by deviating.
Otherwise, the only way $i$ can change the election outcome is by changing
her vote to some $c_k\in H(\veca)\setminus\{a_i\}$, in which case the outcome is a lottery
over $\{c_j,c_k\}$ where both of these candidates has a positive chance of winning. 
The condition of the theorem says that no voter wants to change the 
election outcome in this way.
\end{proof}

\noindent{\bf Theorem \ref{thm:truth-rand-single2}.\ }
{\em
Let $\vecu=(u_1, \dots, u_n)$ be a utility profile over $C$,
let $R\in\{R^C, R^V\}$, and consider a ballot vector $\vecb$
with $W(\vecb)=\{c_j\}$ for some $c_j\in C$ and $b_r\neq a_r$ for some $r\in N$.
Then $\vecb$ is a PNE of the game $G = (\calT, R, \vecu)$
if and only if all of the following conditions hold:
\begin{itemize}
\item[(1)]
$b_i\in\{a_i, c_j\}$ for all $i\in N$;
\item[(2)]
$H(\vecb)\neq\emptyset$;
\item[(3)]
$c_j\succ_i c_k$ for all $i\in N$ and all $c_k\in H(\vecb)\setminus\{b_i\}$;
\item[(4)]
for every candidate $c_\ell\in H'(\vecb)$ 
and each voter $i\in N$ with $b_i=c_j$,
$i$ prefers $c_j$ to the lottery where a candidate
is chosen from $H(\vecb)\cup\{c_j, c_\ell\}$ according to $R$. 
\end{itemize}
}
\begin{proof}
Suppose that a ballot profile $\vecb$ satisfies conditions (1)--(4) 
of the theorem, and consider a voter $i\in N$.
If $b_i=a_i=c_j$, the current outcome is optimal for $i$.
If $b_i=a_i\neq c_j$, the only way that voter $i$ can change the election
outcome is by voting for a candidate $c_k\in H(\vecb)\setminus\{a_i\}$, 
in which case the winner will be chosen from $\{c_j,c_k\}$
according to $R$. By condition (3), voter $i$ does not benefit from this change.
By Proposition~\ref{prop:truth-basic}, the only remaining possibility is that $b_i=c_j\neq a_i$.
Then $i$ can change the election outcome by (a) voting for a candidate 
$c_k\in H(\vecb)$; (b) voting for a candidate $c_\ell\in H'(\vecb)$; or
(c) voting for a candidate in $C\setminus (H(\vecb)\cup H'(\vecb)\cup\{c_j\})$.
In case (a) $c_k$ becomes the unique winner, so by condition (3) this change
is not profitable to $i$. In case (b) the outcome is a tie among the candidates
in $H(\vecb)\cup\{c_j, c_\ell\}$, so by condition (4) voter $i$ cannot profit from
this change. Finally, in case (c) the outcome is a tie among the candidates
in $H(\vecb)\cup\{c_j\}$, and by condition (3), $i$ prefers the current outcome 
to this one. Thus, a ballot vector satisfying conditions (1)--(4) is indeed a PNE.

Conversely, suppose that $\vecb$ is a PNE of $(\calT, R, \vecu)$ for some $R\in\{R^C, R^V\}$
and some utility profile $\vecu$, where $b_r\neq a_r$ for some $r\in N$.
It follows from Proposition~\ref{prop:truth-basic}
that $\vecb$ satisfies condition (1). If condition (2) is violated, voter $r$
can increase her utility by $\eps$, by changing her vote to $a_r$, as $c_j$
would remain the unique election winner in this case. If condition (3) is violated
for some $i\in N$ and some $c_k\in H(\vecb)$, voter $i$ can profitably deviate
by changing her vote to $c_k$; if $b_i=c_j$, $c_k$ would then become 
the unique election winner, and if $b_i\neq c_j$, the outcome will
be a tie between $c_j$ and $c_k$, so under $R$ each of them 
will win with positive probability. Similarly, if condition (4) is violated
for some $i\in N$ and some $c_\ell\in H'(\vecb)$, voter $i$ can profitably deviate
by changing her vote to $c_\ell$, so that the outcome becomes a tie
among $H(\vecb)\cup\{c_j, c_\ell\}$.
This concludes the proof.
\end{proof}

\noindent 
{\bf Corollary \ref{cor:truth-rand-hard}.\ }
{\em
For $R\in\{R^C, R^V\}$, $(\calT, R)$-{\sc SingleNE},
$(\calT, R)$-{\sc TieNE}, and $(\calT, R)$-{\sc ExistNE}
are {\em NP}-complete.
}
\begin{proof}
Let $R\in\{R^C, R^V\}$. For $(\calT, R)$-{\sc TieNE}, as mentioned above, 
our claim follows from Theorem \ref{thm:char-rand} 
and its implications, as discussed in the section on lazy voters.

For $(\calT, R)$-{\sc SingleNE} with $R\in\{R^C, R^V\}$, we can use the same reduction from {\sc MSI} as in 
Theorem \ref{thm:truth-lex-hard}. The only change is the analysis in the last part of the proof of Theorem 
\ref{thm:truth-lex-hard}, due to the different tie-breaking rule. 
In particular, suppose again that a PNE $\vecb$ exists, where $w_2$ is the winner. 
By the analysis of Theorem \ref{thm:truth-lex-hard}, 
the set of candidates $\calE'=\{e\in\calE\mid\score(e, \vecb)=s-2\}$ contains at least $q$ elements.
Consider a candidate $e\in\calE'$. Suppose that among the voters 
from Block 1 who deviated to $w_2$, there exists a voter $v_{i_j}$ who prefers $e$ to $w_2$. 
Her utility in $\vecb$ is $1/2$. Suppose that she deviates to $e$ instead. In this case 
the score of $w_1$, $w_2$, and $e$ becomes $s-1$. Therefore, the new winning set is $\{w_1,w_2,e\}$. 
Given that $u_{i_j}(w_1) = 1/4$ and $u_{i_j}(e)> 3/4$, the utility of voter $v_{i_j}$ becomes more than 
$1/2$, contradicting the fact that $\vecb$ is a PNE.
Thus, for any voter $v_{i_j}$ in Block 1 who deviated to $w_2$, and for any candidate $e\in\calE'$ 
it holds that $v_{i_j}$ prefers $w_2$ to $e$, 
i.e., $e\in A_{i_j}$. 
Thus, $e\in A_{i_j}$ for each $j\in [k]$, where the sets $A_{i_1}, \dots, A_{i_k}$
are defined in the proof of Theorem 
\ref{thm:truth-lex-hard}. As this holds for every $e\in\calE'$ and $|\calE'|\ge q$, 
this means that we have a ``yes''-instance of {\sc MSI}. 
For the reverse direction, the arguments are very similar to the reverse direction in the proof of Theorem~\ref{thm:truth-lex-hard}.

Finally, regarding $(\calT, R^L)$-{\sc ExistNE}, a simple modification 
in the reduction of Theorem 
\ref{thm:truth-lex-hard} can yield the desired result; we omit the details from this version. 
\end{proof} 


\section{Price of Anarchy under Lexicographic Tie-breaking}
\label{app-poa}

We show that $\PoA = \Omega(n)$ both for lazy and for truth-biased voters 
under lexicographic tie-breaking.
In particular, we first establish that $\PoA = n-2$ for lazy voters. 
Then we show that $\PoA = 2n/3$ in the truth-biased model. 
Similar results can be proved for randomized tie-breaking, and we omit them 
from this version of the paper.

\begin{proposition}
\label{prop:lazy-poa}
For lexicographic tie-breaking and lazy voters, $\PoA = n-2$. 
\end{proposition}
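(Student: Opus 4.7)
My plan is to use the structural characterization of PNE provided by Theorem~\ref{thm:lazy-lex-char} to establish both a matching upper and lower bound of $n-2$.

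For the upper bound, I would proceed as follows. Let $\vecb$ be any PNE of $(\calL, R^L, \vecu)$ and let $c_j$ be its winner; let $c^*$ denote the truthful winner (the winner under $\veca$). By Theorem~\ref{thm:lazy-lex-char}, either (i) $\vecb$ is the trivial ballot, all voters rank $c_1$ first, and $c_j = c_1$, or (ii) exactly one voter $i$ casts a non-abstention vote, and that vote is $b_i = a_i = c_j$. In case (i), $c^* = c_1 = c_j$, so the additive loss is $0$. In case (ii), the voter~$i$ contributes to $\score(c_j, \veca)$, hence $\score(c_j,\veca)\ge 1$. If $c^* = c_j$ the loss is again $0$; otherwise $c^*\neq c_j$, and since each voter contributes to at most one candidate's truthful score we have $\score(c^*,\veca)+\score(c_j,\veca)\le n$, which gives
\[
\score(c^*,\veca)-\score(c_j,\veca)\;\le\; n-2\,\score(c_j,\veca)\;\le\; n-2.
\]

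For the matching lower bound, I would exhibit an instance where this is tight. Take $C=\{c_1,c_2,c_3\}$, a single voter with preference $c_2\succ c_3\succ c_1$, and $n-1$ voters with preference $c_3\succ c_2\succ c_1$. Under truthful voting, $c_3$ wins with $\score(c_3,\veca)=n-1$. On the other hand, $c_2$ satisfies the two conditions of part~3 of Theorem~\ref{thm:lazy-lex-char}: $\score(c_2,\veca)=1>0$, and every voter ranks $c_2$ above $c_1$ (the unique lower-indexed candidate). Hence the ballot in which the single $c_2$-supporter votes for $c_2$ while everyone else abstains is the unique PNE, with winner $c_2$ and $\score(c_2,\veca)=1$. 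The additive loss is $(n-1)-1=n-2$.

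Combining these two directions yields $\PoA = n-2$. There is no real obstacle here since Theorem~\ref{thm:lazy-lex-char} already pins down the structure of every PNE to either the unanimous-$c_1$ case or the single-voter case; the only thing to verify carefully is the subcase analysis showing that whenever the loss is positive, the PNE winner and the truthful winner are distinct candidates, so their truthful scores sum to at most~$n$.
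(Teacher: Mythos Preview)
Your proof is correct and follows essentially the same approach as the paper: both the upper bound (via Theorem~\ref{thm:lazy-lex-char}, using that any PNE winner has positive truthful score) and the lower bound example (one $c_2$-supporter, $n-1$ $c_3$-supporters, all preferring $c_2$ to $c_1$) match the paper's argument. Your upper-bound arithmetic is slightly more explicit, and your example uses only three candidates rather than $n$, but these are cosmetic differences.
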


\begin{proof}
We prove first that $\PoA\leq n-2$. To see this, note that by 
Theorem \ref{thm:lazy-lex-char}, the winner in any PNE must have 
a positive score in the truthful profile. 
Thus, in the worst-case scenario for the Price of Anarchy, 
the truthful winner of $\veca$ has score $n-1$, 
and there is a PNE where the winner is the candidate supported 
by the remaining voter. Thus $\PoA \leq n-2$.

To show the lower bound it suffices to exhibit an example. 
This is done in Example \ref{ex:lazy-poa}  below.  
\end{proof}

\begin{example}
\label{ex:lazy-poa}
{\em
Consider the lazy voters model and the profile of 
Figure~\ref{fig:lazy-poa}, with $n$ voters and $n$ candidates. 
It does not matter how we fill in the missing rankings in the figure. 
%
The truthful winner is $c_3$ with a score of $n-1$.
However, consider the profile $\vecb = (c_2, \bot, \bot, \ldots, \bot)$. 
The winner in $\vecb$ is $c_2$, and no voter can unilaterally change the outcome in her favor.
Indeed, if anyone votes for $c_1$, then $c_1$ is the new winner, 
but all voters prefer $c_2$ to $c_1$.
On the other hand, voting for any other candidate cannot 
change the outcome due to tie-breaking. Since the score of $c_2$ in $\veca$ is $1$, 
we have $\PoA \geq n-2$.

\begin{figure}[ht]
{
\begin{center}
\begin{tabular}{||ccccc||} \hline
$1$ & $2$ & $3$ & $\ldots$ & $n$  \\\hline
$c_2$ & $c_3$ & $c_3$ & $\ldots$ & $c_3$ \\
$\vdots$ & $c_2$ & $c_2$ & $\ldots$ & $c_2$ \\
$\vdots$ & $c_1$ & $c_1$ & $\ldots$ & $c_1$ \\
$\vdots$ & $\vdots$ & $\vdots$ & $\ldots$ & $\vdots$ \\\hline
\end{tabular}
\end{center}
}
\caption{$\PoA$ example for lazy voters.}
\label{fig:lazy-poa}
\end{figure}
}
\end{example}

\begin{proposition}
For lexicographic tie-breaking and truth-biased voters, 
$\PoA = 2n/3$. This holds even for single-peaked or single-crossing preference profiles.
\end{proposition}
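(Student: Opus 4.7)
The proof splits into matching upper and lower bounds of the form $\tfrac{2n}{3}+O(1)$.

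For the \textbf{upper bound}, fix a non-truthful PNE $\vecb$ with winner $w=c_j$ differing from the truthful winner $w^*=c_{j^*}$ with $s^*=\score(w^*,\veca)$. Lemma~\ref{lem:threshold} gives a threshold $c_t$ with $\score(c_t,\vecb)=\score(c_t,\veca)$, and Proposition~\ref{prop:truth-basic} ensures that every voter either votes truthfully or defects to $w$. I will case-split on whether $c_t=w^*$. In that case $w^*$-supporters cannot defect (else $w^*$ would not be a threshold), so $\score(w^*,\vecb)=s^*$ and $n\ge\score(w,\vecb)+\score(w^*,\vecb)\ge 2s^*$ gives $\PoA\le n/2$. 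Otherwise $w,w^*,c_t$ are pairwise distinct, so their truthful scores sum to at most $n$; I combine $\score(c_t,\veca)\ge\score(w,\vecb)-1$ with the inequality $\score(w,\vecb)\ge(s^*+\score(w,\veca))/2$, obtained by noting that at most $\score(w,\vecb)-\score(w,\veca)$ voters can have defected from $w^*$ to $w$ (so $\score(w^*,\vecb)\ge s^*-(\score(w,\vecb)-\score(w,\veca))$) and then using $\score(w,\vecb)\ge\score(w^*,\vecb)$. Linear arithmetic on these inequalities yields $\PoA=s^*-\score(w,\veca)\le\tfrac{2n}{3}+O(1)$.

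For the \textbf{lower bound}, I will construct a single-crossing family. Take $n=3\ell$ voters and four candidates $c_1,c_2,c_3,c_4$ with lex priority $c_1>c_2>c_3>c_4$, arranged in the order witnessing single-crossing as follows: one voter with preferences $c_1\succ c_4\succ c_2\succ c_3$; then $2\ell-1$ voters with $c_2\succ c_4\succ c_3\succ c_1$; then $\ell$ voters with $c_3\succ c_4\succ c_2\succ c_1$. Truthfully $c_2$ wins with $s^*=2\ell-1$. I will then exhibit the ballot $\vecb$ in which exactly $\ell+1$ of the $c_2$-supporters vote for $c_4$ and everyone else votes truthfully; the resulting scores are $(1,\ell-2,\ell,\ell+1)$, making $c_4$ the winner with truthful score $0$, so $\PoA\ge 2\ell-1=\tfrac{2n}{3}-1$. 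Single-crossing will be checked pair by pair: for each of the six candidate pairs, the voters preferring one candidate to the other form a contiguous prefix or suffix under the given voter ordering.

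The \textbf{main obstacle} is verifying that $\vecb$ is genuinely a PNE, and specifically that a defecting $c_2$-supporter does not wish to revert. After a single reversion the scores become $(1,\ell-1,\ell,\ell)$, so by lex tie-breaking the winner becomes $c_3$, not the defector's true top $c_2$; and since every defector ranks $c_4\succ c_3$, she strictly prefers the status quo $c_4$ to the post-deviation outcome $c_3$ and will not revert. This coordination failure---collectively the $c_2$-supporters would love to restore $c_2$, but no individual can effect this unilaterally---is exactly what drives $\PoA$ up to $2n/3$, and the matching upper bound confirms that no worse coordination failure is possible. The remaining potential deviations (by the non-defecting $c_2$-supporters, the $c_3$-supporters, or the $c_1$-voter) are routine applications of Proposition~\ref{prop:truth-basic}: each such deviation either preserves the winning set (blocked by truth-bias) or changes it to a candidate the deviator ranks no higher than $c_4$.
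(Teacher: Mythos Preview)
Your upper bound argument is essentially the paper's: the same case split on whether the truthful winner is the threshold candidate, and the same linear arithmetic (the paper bounds $s^*$ directly rather than $s^*-\score(w,\veca)$, but the inequalities are the same up to the $O(1)$ slack you already flagged).

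The lower bound construction, however, has a genuine gap in the single-crossing claim. With your three preference types
\[
A:\ c_1\succ c_4\succ c_2\succ c_3,\qquad
B:\ c_2\succ c_4\succ c_3\succ c_1,\qquad
C:\ c_3\succ c_4\succ c_2\succ c_1,
\]
look at the pair $(c_2,c_4)$: type $A$ prefers $c_4$, type $B$ prefers $c_2$, type $C$ prefers $c_4$. This forces $B$ to sit at an end of any single-crossing voter ordering. But the pair $(c_1,c_2)$ forces $A$ to an end, and the pair $(c_2,c_3)$ forces $C$ to an end; with three types you cannot place all three at the ends, so no single-crossing ordering exists. (The profile is not single-peaked either: the bottom-ranked candidates $c_1,c_3$ must be the axis endpoints, and neither placement of $c_2,c_4$ in the middle is compatible with both $B$ and $C$.) Your PNE verification is correct, so the construction does give $\PoA\ge 2n/3-1$ for \emph{unrestricted} profiles, but it does not establish the restricted-domain part of the statement.

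The fix is to drop the fourth candidate. The paper uses three candidates with lex order $c_1>c_2>c_3$: a block of $n/3$ voters with $c_1\succ c_2\succ c_3$ and $2n/3$ voters with $c_3\succ c_2\succ c_1$, of whom $n/3+1$ defect to $c_2$ in the PNE. The role your $c_4$ plays (a ``compromise'' candidate that defectors can support without letting $c_1$ win) is played instead by $c_2$, and the role of your $c_3$ (the threshold candidate that would win if a defector reverted) is played by $c_1$: if a defector reverts to $c_3$, all three candidates tie and $c_1$ wins by lex, which the defector likes least. This three-candidate profile is single-peaked on the axis $c_1\text{--}c_2\text{--}c_3$ and single-crossing under the natural voter order, so it proves the restricted-domain claim that your four-candidate example cannot.
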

\begin{proof}
As in Proposition \ref{prop:lazy-poa}, we first prove the upper bound.
Let $c_i$ be the winner in the truthful profile with a score of $s^*$. 
Let $\vecb\neq\veca$ be a non-truthful PNE and let $c_j$ be the winner in $\vecb$. 
Clearly, we have $\PoA \leq s^*$, since in the worst case $c_j$ 
has no supporters in $\veca$. Hence, it is enough to bound $s^*$.

By Lemma \ref{lem:threshold}, we know that there exists at least one threshold candidate with respect to $\vecb$.
We consider two cases:

\noindent {\bf Case 1:} 
$c_i\not\in T(\vecb)$. 
Then there is some $c_k\neq c_i$ such that $c_k\in T(\vecb)$
Let $s = \score(c_k, \veca)$. By Lemma \ref{lem:threshold} 
we know that $c_k$ receives $s$ points in $\vecb$ as well. 
Hence $c_j$ has a score of at most $s+1$ in $\vecb$. 
By Proposition \ref{prop:truth-basic} this means that there are at most $s+1$ non-truthful votes in $\vecb$. 
Hence the score of $c_i$ in $\vecb$ has to be at least $s^* - (s+1)$. 
Since $c_i$ is not a winner in $\vecb$, we have
$ s^* - (s+1) \leq \score(c_i, \vecb) \leq s+1$, 
and hence $s^* \leq 2s+2$.
Since the total score of  $c_i$ and $c_k$ in $\veca$ does not exceed $n$, 
we have $s+s^* \leq n$. But then, if $s^* > 2n/3$, 
this would imply that $s> n/3 -1$, i.e., $s\geq n/3$, and hence 
$s + s^* > n$, a contradiction. Thus we have $\PoA \leq s^* \leq 2n/3$.

\noindent {\bf Case 2:} $c_i\in T(\vecb)$. 
In this case the Price of Anarchy is somewhat better. 
Let $s=\score(c_j, \vecb)$. Candidate $c_i$ must have the same set of votes 
in $\vecb$ as in $\veca$ by Lemma \ref{lem:threshold}. Hence we have 
$s + s^* \leq n$. But we must also have $s^* \leq s$, 
otherwise $c_j$ is not the winner. 
But then if $s^* > n/2$, we would also have $s > n/2$, a contradiction. 
Thus, in this case we have $\PoA \leq s^* \leq n/2$.

Hence in worst case, $\PoA \leq 2n/3$. Finally, to show that the worst case is attained, 
we exhibit a construction in Example \ref{ex:truth-biased-poa}.
\end{proof}

\begin{example}
\label{ex:truth-biased-poa}
{\em
In Figure \ref{fig:truth-biased-PoA}, we show a preference profile for $n$ voters, where $n$ is divisible by $3$.
Block 1 consists of $n/3$ voters, Block 2 consists of $n/3+1$ voters, and Block 3 has $n/3 - 1$ voters.
In the figure, it does not matter how we fill in the missing rankings, 
but note that we can fill them in a way that makes the preference profile
single-peaked or single-crossing.

Suppose the tie-breaking rule is $c_1 > c_2 > c_3$. 
Under truthful voting, $c_3$ is the winner with a score of $2n/3$.
We claim now that the profile $\vecb$, in which all voters of Block 2 vote for $c_2$ is a PNE. 
To see this, note that $c_2$ is indeed the winner in $\vecb$ with a score of $n/3+1$. 
Candidate $c_1$ would only need one additional vote to become the winner, but 
there is no incentive for any voter from Block 2 or 3 to vote for $c_1$, 
since all of them prefer $c_2$ to $c_1$.
Also, no voter from Block 2 can change the outcome 
in favor of $c_3$ by a unilateral deviation, due to the tie-breaking rule.
If a voter from Block 2 switches to her truthful vote, 
then the new winner is $c_1$, since there is a tie with all candidates.
Hence $\vecb$ is a PNE, and the score of $c_2$ 
in the truthful profile is $0$. This means that in this example we have $\PoA\ge 2n/3$. 

\begin{figure}[ht]
{
\begin{tabular}{|*{4}{l}|*{4}{l}|*{4}{l}|}\hline
\multicolumn{4}{|c|}{Block 1}&\multicolumn{4}{|c|}{Block 2}&\multicolumn{4}{|c|}{Block 3}\\\hline\hline
$c_1$&$c_1$&$...$&$c_1$
&$c_3$&$c_3$&$...$&$c_3$
&$c_3$&$c_3$&$...$&$c_3$
\\
$\vdots$&$\vdots$&$...$&$\vdots$
&$\vdots$&$\vdots$&$...$&$\vdots$
&$\vdots$&$\vdots$&$...$&$\vdots$
\\
\multicolumn{4}{|c|}{arbitrary}
&\multicolumn{4}{|c|}{arbitrary}
&\multicolumn{4}{|c|}{arbitrary}
\\
$\vdots$&$\vdots$&$...$&$\vdots$
&$\vdots$&$\vdots$&$...$&$\vdots$
&$\vdots$&$\vdots$&$...$&$\vdots$
\\
$c_2$&$c_2$&$...$&$c_2$
&$c_2$&$c_2$&$...$&$c_2$
&$c_2$&$c_2$&$...$&$c_2$
\\
$c_3$&$c_3$&$...$&$c_3$
&$c_1$&$c_1$&$...$&$c_1$
&$c_1$&$c_1$&$...$&$c_1$
\\
\hline
\end{tabular}}
\caption{$\PoA$ example for truth-biased voters}
\label{fig:truth-biased-PoA}
\end{figure}
}
\end{example}

\section{Principled Voters}\label{app:principled}
We will now present our results for the setting with principled voters.
We omit the proofs of all results in this section, as they follow directly
from the analysis presented earlier in the paper.

In what follows, we consider elections with a set of strategic (i.e., lazy or truth-biased) 
voters $N=\{1, \dots, n\}$
and a set of principled voters $P=\{n+1,\dots, n+s\}$; we assume that either all voters in $N$ are lazy
or all of them are truth-biased. We extend our notation to such games as follows:
given a setting $\calS\in\{\calL, \calT\}$, a tie-breaking rule $R\in\{R^L, R^V, R^C\}$,
$n$ strategic voters with utilities $\vecu=(u_1, \dots, u_n)$, and $s$ principled
voters, whose votes are given by the ballot vector $\veca^P=(a_{n+1}, \dots, a_{n+s})$,
we denote the resulting game by $(\calS, R, \vecu, \veca^P)$. The principled voters
are not considered to be among the players; thus, the set of players in the modified game is still $N$.
As before, we use $\vecb$ to denote a ballot vector of the strategic voters;
$\vecb+\veca^P$ denotes a ballot vector that combines the votes of
the strategic and principled voters.
The computational problems {\sc ExistNE}, {\sc TieNE}, and {\sc SingleNE}
extend naturally to this setting;
we denote the respective variants of these problems by
{\sc ExistNE$^P$}, {\sc TieNE$^P$}, and {\sc SingleNE$^P$}, respectively.

\subsection{Principled + Lazy Voters, Lexicographic Tie-breaking}\label{sec:principled-lazy-lex}
\noindent
We have argued that in elections where all voters are lazy and the tie-breaking
rule is $R^L$, there is at most one voter who does not abstain and all PNE have the same winner. 
However, in the presence of principled voters, this is no longer true;
indeed, there are elections where {\em every} candidate can win in a PNE.

\begin{example}\label{ex:principled-lazy-lex}
{\em
Consider an election over a candidate set $C=\{c_1, \dots, c_m\}$, $m>1$, where
there are two principled voters who both vote for $c_m$, and two lazy voters
who both rank $c_m$ last. Then the ballot vector where both lazy voters abstain is 
a PNE (with winner $c_m$). Moreover, for every $j\in[m-1]$
the ballot vector where both lazy voters vote for $c_j$ is a PNE as well
(with winner $c_j$). 
}
\end{example}

Nevertheless, given an election with principled and lazy voters, we can characterize
the set of candidates who can win in a PNE of the respective game.

\begin{proposition}
Let $\vecu$ be the lazy voters' utility profile over $C$ and let $\veca^P$ be 
the principled voters' ballot vector. 
Let $j=\min\{k\mid c_k\in W(\veca^P)\}$, and let 
$H^+(\veca^P)=\{c_k\in H(\veca^P)\mid k<j\}$. 
Then  the game $G=(\calL, R^L, \vecu,\veca^P)$ has the following properties.
\begin{itemize}
\item[(1)]
If $\vecb$ is a PNE of $G$ then there is at most one candidate 
$c\in C$ such that $b_i=c$ for some $i\in N$; further, if
$b_i=c$ for some $c\in C$, $i\in N$, then $c$ is the winner in $\vecb+\veca^P$.
\item[(2)]
$G$ has a PNE where $c_j$ wins if and only if $(\bot, \dots,\bot)$ is a PNE of $G$. 
\item[(3)]
If $k>j$ then $G$ has a PNE where $c_k$ wins if and only if 
there are at least $M(\veca^P)+1-\score(c_k, \veca^P)$ lazy voters who prefer $c_k$ to all candidates in
$(W(\veca^P)\cup H^+(\veca^P))\setminus\{c_k\}$.
\item[(4)]
If $k<j$ then $G$ has a PNE where $c_k$ wins if and only if
there are at least $M(\veca^P)-\score(c_k, \veca^P)$ lazy voters who prefer $c_k$ to all candidates in
$(W(\veca^P)\cup H^+(\veca^P))\setminus\{c_k\}$.
\end{itemize} 
\end{proposition}

\begin{corollary}
The problems $(\calL, R^L)$-{\sc ExistNE$^P$}, $(\calL, R^L)$-{\sc TieNE$^P$}, and $(\calL, R^L)$-{\sc SingleNE$^P$}
are in {\em P}.
\end{corollary}

\subsection{Principled + Lazy Voters, Randomized Tie-breaking}\label{sec:principled-lazy-rand}
\noindent
We will now consider the effect of the presence of principled voters 
on lazy voters under randomized tie-breaking. We show that
single-winner PNE in this setting may have a more complicated structure
than single-winner PNE in the absence of principled voters.
On the other hand, PNE where several candidates are tied
for winning are very similar to those that arise when no principled
voters are present. We first consider the random candidate tie-breaking rule.

\begin{proposition}\label{prop:principled-lazy-rand-unique}
Let $\vecu=(u_1, \dots, u_n)$ be the lazy voters' utility profile over $C$, $|C|=m$,
and let $\veca^P=(a_{n+1}, \dots, a_{n+s})$ be the principled voters' ballot profile. 
The game $G = (\calL, R^C, \vecu, \veca^P)$ admits a PNE $\vecb$ with $W(\vecb+\veca^P)=\{c_j\}$
for some $c_j\in C$ 
if and only if one of the following conditions holds: 
\begin{itemize}
\item[(1)]
$W(\veca^P)=\{c_j\}$, $H(\veca^P)=\emptyset$;
\item[(2)]
$|V_j|\ge M(\veca^P)+1-\score(c_j, \veca^P)$, where $V_j$ is the set that consists
of all voters $i\in N$ such that (a) $u_i(c_j)>u_i(c_k)$ for all $c_k\in W(\veca^P)$
and (b) for each $c_\ell\in H(\veca^P)$ it holds that
$$
u_i(c_j)\ge \frac{1}{|W(\veca^P)+1|}\sum_{c\in W(\veca^P)\cup\{c_\ell\}}u_i(c).
$$ 
\end{itemize}
Moreover, if condition (1) holds then $G$ has a PNE where all lazy voters abstain, 
and if condition (2) holds then $G$ has a PNE where exactly
$M(\veca^P)+1-\score(c_j, \veca^P)$ lazy voters vote for $c_j$, 
while the remaining lazy voters abstain. The game $G$ has no other PNE
with winning set $\{c_j\}$.
\end{proposition}

\begin{corollary}\label{cor:principled-lazy-singleNE}
The problem $(\calL, R^C)$-{\sc SingleNE$^P$} is in {\em P}.
\end{corollary}

\begin{proposition}\label{prop:principled-lazy-rand-tie}
Let $\vecu=(u_1, \dots, u_n)$ be the lazy voters' utility profile over a candidate set $C$, $|C|=m$,
and let $\veca^P=(a_{n+1}, \dots, a_{n+s})$ be the principled voters' ballot profile.
Then the game $G = (\calL, R^C, \vecu, \veca^P)$ admits a PNE $\vecb$
with $|W(\vecb+\veca^P)|>1$ if and only if one of the following conditions holds:
\begin{itemize}
\item[(1)]
each candidate is ranked first by at most one voter in $N\cup P$
and $\frac{1}{n+s}\sum_{i\in N\cup P}u_\ell(a_i)\ge 
\max_{i\in (N\cup P)\setminus\{\ell\}}u_\ell(a_i)$ for all $\ell\in N$.  
\item[(2)]
there exists a set of candidates
$X = \{c_{\ell_1}, \dots, c_{\ell_k}\}$ with $k\ge 2$, 
a positive integer $n'\le n$ with $n'/k\ge 2$ such that
for each $c\not\in X$ we have $\score(c, \veca^P)<n'/k$,  
and a partition of the lazy voters into $k$ groups
$N_1, \dots, N_k$ (some of which may be empty) such that 
\begin{itemize}
\item[(a)] 
for each $j\in [k]$ we have $|N_j|+\score(c_{\ell_j}, \veca^P)= n'/k$;
\item[(b)]
for each $j\in[k]$ and each $i\in N_j$ we have $c_{\ell_j}\succ_i c$
for all $c\in X\setminus\{c_{\ell_j}\}$;
\item[(c)]
for each $j\in[k]$ and each $i\in N_j$ we have 
$\frac{1}{k}\sum_{c\in X}u_i(c)\ge \max_{c\in X\setminus\{c_{\ell_j}\}}u_i(c)$;
\item[(d)]
for each $j\in[k]$, each $i\in N_j$, and each $c'\in C\setminus X$ with $\score(c', \veca^P)=n'/k-1$
we have
$\frac{1}{k}\sum_{c\in X}u_i(c)\ge \frac{1}{k}\sum_{c\in (X\cup\{c'\})\setminus\{c_{\ell_j}\}}u_i(c)$.
\end{itemize}
\end{itemize}
Moreover, if condition (1) holds then $G$ has a PNE where 
each lazy voter votes for her top candidate, and if condition
(2) holds, then $G$ has a PNE where each lazy voter votes for her top candidate in $X$.
The game $G$ has no other PNE with two or more winners.
\end{proposition}

The following corollary is a direct consequence of Corollary~\ref{cor:lazy-rand-hard}
and the fact that the model with no principled voters is a special case 
of the model  with principled voters.
\begin{corollary}                          
The problems
$(\calL, R^C)$-{\sc TieNE$^P$} and $(\calL, R^C)$-{\sc ExistNE$^P$}
are {\em NP}-complete.
\end{corollary}

The reader may have noticed that 
Proposition~\ref{prop:principled-lazy-rand-unique}, 
Corollary~\ref{cor:principled-lazy-singleNE}, and Proposition~\ref{prop:principled-lazy-rand-tie} 
are stated for $R^C$, but not for $R^V$.
The reason for this is that in the presence of principled voters 
the tie-breaking rules $R^C$ and $R^V$
are no longer equivalent.

\begin{example}\label{ex:rc-neq-rv}
{\em
Consider an election over the candidate set $C=\{c_1, c_2, c_3\}$, 
where there are two lazy voters whose utility function 
is given by $u(c_1)=20$, $u(c_2)=4$, $u(c_3)=1$, two
lazy voters whose utility function 
is given by $u'(c_1)=20$, $u'(c_2)=4$, $u'(c_3)=1$, 
and one principled voter who ranks the candidates as $c_3\succ c_1\succ c_2$.
It is easy to see that both for $R^C$ and for $R^V$
the resulting game has a PNE where two lazy voters
vote for $c_1$, two lazy voters vote for $c_2$, and the principled
voter votes for $c_3$. Under $R^C$ candidates $c_1$ and $c_2$
are equally likely to win in this PNE. However, under $R^V$
candidate $c_1$ wins with probability $3/5$ and candidate $c_2$
wins with probability $2/5$. 
}
\end{example}

Nevertheless, all results in this section can be extended to random voter tie-breaking, 
by replacing the uniform lotteries over the winning sets 
in Propositions~\ref{prop:principled-lazy-rand-unique} and~\ref{prop:principled-lazy-rand-tie}
by lotteries that correspond to choosing an element of the winning set
according to the preferences of a random voter (who may be principled or lazy).
While the inequalities that one needs to verify become more cumbersome,
the complexity of the respective computational problems remains the same.
In particular, we obtain the following corollary.

\begin{corollary}
The problems
$(\calL, R^V)$-{\sc TieNE$^P$} and $(\calL, R^V)$-{\sc ExistNE$^P$}
are {\em NP}-complete, whereas
$(\calL, R^V)$-{\sc SingleNE$^P$} is in~{\em P}.
\end{corollary}

\subsection{Principled + Truth-biased Voters}
Principled and truth-biased voters are quite similar in their behavior;
therefore, adding principled voters to the setting of Section~\ref{sec:truth}
results in fewer changes than adding them to the setting
of Section~\ref{sec:lazy}. 

To illustrate this point, we will now show how to extend
Proposition~\ref{prop:truth-lex} to settings where principled voters may be present.

\begin{proposition}\label{prop:principled-truth-lex}
Let $\vecu$ be the utility profile of truth-biased voters, 
let $\veca$ be their truthful ballot vector, 
and let $\veca^P$ be the ballot vector of principled voters.
Let $j=\min\{r\mid c_r\in  W(\veca+\veca^P)\}$.
Then $\veca$ is a PNE of $(\calT, R^L, \vecu, \veca^P)$
if and only if neither of the following conditions holds:
\begin{itemize}
\item[(1)]
$|W(\veca+\veca^P)|>1$,
and there exists a candidate $c_k\in W(\veca+\veca^P)$
and a voter $i\in N$ such that $a_i\neq c_k$ and $c_k\succ_i c_j$.
\item[(2)]
$H(\veca+\veca^P)\neq\emptyset$, and there exists a candidate $c_k\in H(\veca+\veca^P)$
and a voter $i\in N$ such that $a_i\neq c_k$, $c_k\succ_i c_j$, and $k<j$.
\end{itemize}
\end{proposition}

All other claims in Section~\ref{sec:truth} can be modified in a similar way:
essentially, we replace $W(\vecb)$, $H(\vecb)$ and $H'(\vecb)$
with $W(\vecb+\veca^P)$, $H(\vecb+\veca^P)$ and $H'(\vecb+\veca^P)$,
but when considering the voters' incentives to change their votes, 
we limit our attention to truth-biased voters. Of course, 
we have to take into account the number of votes cast by principled
voters in favor of each candidate in the winning set,
and distinguish between $R^V$ and $R^C$ (as we did above).
Finally, it is immediate that all hardness results 
established in Section~\ref{sec:truth} remain true in the presence
of principled voters.

\end{document}